\crefname{hypothesis}{Hypothesis}{Hypotheses}
\title{Point Cloud Surface Parametrization using HAND and LEG: Hausdorff Approximation from Node-wise Distances and Localized Energy for Geometry\thanks{Submitted to the editors DATE.
\funding{The second author was supported by HKRGC GRF (Project ID: 14307621).
}}}
\author{Ka Ho Lai\thanks{Department of Mathematics, The Chinese University of Hong Kong
  (\email{khlai@math.cuhk.edu.hk}
  ).}
\and Lok Ming Lui\thanks{Department of Mathematics, The Chinese University of Hong Kong
  (\email{lmlui@math.cuhk.edu.hk}
  ).}
}
\newcommand*{\addFileDependency}[1]{
  \typeout{(#1)}
  \@addtofilelist{#1}
  \IfFileExists{#1}{}{\typeout{No file #1.}}
}
\DeclareMathOperator*{\argmin}{argmin}
\DeclarePairedDelimiter\ceil{\lceil}{\rceil}
\begin{document}

\maketitle

\begin{abstract}
Surface parametrization is a crucial part in various fields, having applications in computer graphic, medical imaging, scientific computing and computational engineering. The majority of surface parametrization approaches are performed on triangular meshes. 
On the contrary, the theories and methods of point cloud surface parametrization are less researched, despite its rising significance. 
In this work, we compute surface parametrization in an optimization approach using neural networks, with novel loss functions introduced without extrinsic information, together with theoretical analyses. 
Based on the theory, we develop an optimization algorithm to improve the parametrization quality. Using our methods, general open surfaces can be parametrized in either free-boundary manner or with arbitrary domain constraints. Landmark matching can also be enforced under our framework. Numerical experiments are conducted and presented, along with applications including surface reconstruction and boundary detection. 
\end{abstract}

\begin{keywords}
    point clouds, conformal mapping, surface parametrization, neural networks, landmark matching
\end{keywords}

\begin{AMS}
    65D18, 
    65K10, 
    68T07 
\end{AMS}

\section{Introduction}
In recent years, 3D scanning technologies have seen significant advancements. High-quality point clouds can now be obtained through scanning and applied to different fields in academia and industry. Due to its significance, many researchers have conducted studies on point cloud denoising, registration, recognition, and surface reconstruction using various methods and approaches. 
In the tasks, surface parametrization methods are frequently used, especially in point cloud surface registration \cite{tam2012registration} and reconstruction \cite{lim2014surface, choi2016spherical}. By mapping complicated shapes to simple parameter domains, the difficulty required for the processing is greatly reduced. However, many of methods of point cloud parametrization are achieved through surface approximation techniques, including local mesh construction \cite{Choi2021PointCloud}, moving least square \cite{choi2016spherical, meng2016tempo, meng2018pcbc} or lattice approximation \cite{wu2020computing}, to construct the desired differential operators. 
A point cloud is merely a set of points, and does not intrinsically contain boundary or connectivity information. Different choices of the boundary or connectivity lead to different results, and inaccurate estimations can accumulate errors throughout the process. 
This challenge motivates our research toward developing methods to avoid the uncertainties due to connectivity and boundary determination. 

In domain-constrained surface parametrization, the surface boundary is typically matched to boundary of the prescribed parameter domain. Meanwhile, connectivity is employed to approximate differential operator for specific geometric purpose. 
In this paper, we formulate the task of point cloud surface parametrization as an optimization problem with novel objective functions to address the stated possible problems without boundary or connectivity. 
The first loss function, called Hausdorff Approximation from Node-wise Distances (HAND), is a smooth approximation to Hausdorff distance, which is designed to address the boundary fitting problem. It transforms the problem of matching surface boundary and domain boundary into matching surface mapping and parameter domain. 
Its minimization provides a soft constraint on the domain alignment. 
As a generalization to traditional point-wise correspondence methods, HAND allows landmark-target correspondence with uneven numbers of points, which can be further generalized to region-wise correspondence. 
On the other hand, the second loss function, called Localized Energy for Geometry (LEG), measures geometric distortion between point clouds and their mappings. Inspired by conformal mapping, minimizing this function ensures that the resulting parametrization preserves the essential properties of the original point cloud, without the needs of approximations on the surface geometry. 
Theoretical analyses on HAND and LEG are performed on explaining the underlying mathematical meanings. 
Based on the insight from the theorems proved, a customized is algorithm developed for this optimization problem. It divides the overall process into stages, in order to refine detail quality. Experiments for the proposed methods are conducted, and applications, such as surface reconstruction and boundary detection, are demonstrated.

The organization of the paper is as follows. 
In Section \ref{sec: contributions}, the contributions of our work are listed. 
In Section \ref{sec: previous works}, previous related works are reviewed. 
In Section \ref{sec: math background}, some of the related mathematical concepts are discussed. 
In Section \ref{sec: proposed methods}, the proposed loss functions for the optimization, with theoretical analysis, and also an optimization scheme, are introduced. 
In Section \ref{sec: implementation}, some implementation details and experiment settings are stated.
In Section \ref{sec: experiments}, 
experimental results for the parameterization of point clouds with varying geometry and topology are presented. 
In Section \ref{sec: application}, boundary detection and surface reconstruction are shown as applications to our methods. 
Finally, we conclude our work in Section \ref{sec: conclusions}. 

\section{Contributions}
\label{sec: contributions}
In summary, our work has the following contributions:
\begin{enumerate}
    \item 
    We introduced Hausdorff Approximation from Node-wise Distances (HAND), enabling domain-constrained surface parametrization, without the presence of boundary information. Also, HAND can be applied in landmark matching. 

    \item 
    Inspired by conformal geometry, we introduced Localized Energy for Geometry (LEG), which can be computed directly on point clouds, without the approximation of differential operators on the surface. 

    \item 
    Theoretical analyses of the two loss functions are presented, relating them to the ground-truth Hausdorff distance and angle distortion, thus offering geometric insights into our methods.

    \item
    Building on the theoretical analyses of the minimization objectives, we designed an optimization algorithm that enhances the performance of minimizing the stated loss functions and improves the quality of the resulting mappings.

    \item 
    Our method is capable of parameterizing surfaces of any geometry and topology onto user-specified parameter domains of arbitrary geometry and topology. 

\end{enumerate}

\section{Previous works}
\label{sec: previous works}
An isometry preserves surface metrics, and hence also angle and area, is an ideal parametrization. However, its existence is not guaranteed in general. 
Many researches then turn their focus on preserving particular geometric quantities or minimizing distortions. 

Conformal and quasiconformal mappings are commonly used in surface parametrization. Conformal maps preserve local geometry, while quasiconformal maps allow angle distortion, provides much more flexibility conformal maps. 
Gu et al. \cite{gu2004genus, gu2003global} proposed an algorithm to compute global conformal surface parametrization. 
Levy et al. \cite{levy2023least} introduced least square conformal energy for conformal maps. 
Wang et al. used \cite{wang2005brain} holomorphic 1-forms in computing the parametrization for brain surfaces. 
Lui et al. \cite{lui2010shape, Lui2012} developed a computational method of Beltrami holomorphic flow for surface registration. 
Zeng et al. \cite{Zeng2012} computed quasiconformal map using auxiliary metric method. 
More computational methods and theories about conformal maps can be found in the book written by Gu et al. \cite{CCG_2007}. 
Choi et al. \cite{choi2022recent} also review many quasiconformal methods.

Triangular meshes are commonly used in existing methods. In contrast, parameterizing point clouds is more challenging due to the lack of connectivity, resulting in fewer studies addressing this issue \cite{zwicker2004meshing, tewari2006meshing, zhang2010mesh}. Scientific computing on point clouds often rely on approximations of differential operators \cite{belkin2008towards, belkin2009constructing, liang2012geometric, liang2013solving}. 
For surface parametrization, methods of moving least square \cite{lange2005anisotropic}, are commonly employed for surface estimations. Using methods of moving least square, Choi et al. \cite{choi2016spherical} approximated Laplace–Beltrami operator to compute conformal maps, and Meng et al. \cite{meng2018pcbc} constructed point cloud Beltrami coefficient. Using defined point cloud Beltrami coefficient, Meng et al. \cite{meng2016tempo} computed point cloud Teichmüller mappings. Apart from moving least square, Choi et al. \cite{Choi2021PointCloud} constructed local meshes to approximate Laplace–Beltrami operator for computing free boundary conformal maps. 
Wu et al. \cite{wu2020computing} computed harmonic and conformal parametrization by lattice approximation on point clouds.  

Artificial neural networks \cite{Goodfellow-et-al-2016, bronstein2017geometric} are widely used in geometry processing. 
The continuity nature allows storing a continuous surface with infinite resolution using only a finite number of parameters. As a result, many researches on surface parametrization or surface representation are done using neural networks. 
Groueix et al. \cite{groueix2018papier} introduced AtlasNet to represent a 3D shape as a collection of parametric surface elements. 
Yang et al. \cite{yang2018foldingnet} used an autoencoder structure, namely FoldingNet, which deforms a 2D domain to a surface of a 3D point cloud object. 
Groueix et al. \cite{groueix20183d} introduced shape deformation networks for matching deformable shapes. 
Mescheder et al. \cite{mescheder2019occupancy} proposed occupancy networks, which represents 3D shapes by decision boundary of a classifier. 
Morreale et al. \cite{morreale2021neural} proposed neural surface map as a neural network representation to surface, and later an improvement using convolution \cite{morreale2022neural}.

\section{Mathematical Background}
\label{sec: math background}

\subsection{Conformal Mappings}

Denote two Riemann surfaces $\mathcal{M}, \mathcal{N}$, and their corresponding metrics $ds^2_\mathcal{M} = \sum_{i, j} g_{ij} dx^i dx^j$ and $ds^2_\mathcal{N} = \sum_{i, j} h_{ij} dx^i dx^j$. Considering a smooth mapping $f : \mathcal{M} \to \mathcal{N}$, we can define pull back metric induced by $f$. 

\begin{definition}
    \label{def: pull back metric}
    The pull back metric $f^* ds^2_\mathcal{N}$, induced by $f$ and $ds^2_\mathcal{N}$, is the metric defined on $\mathcal{M}$ as 
    \begin{equation}
        f^* ds^2_\mathcal{N} 
        = 
        \sum_{m, n} 
        \left( 
            \sum_{i, j}
            h_{ij}(
                f(x^1, x^2)
            )
            \frac{\partial f^m}{\partial x^i}
            \frac{\partial f^n}{\partial x^j}
        \right)
        dx^m dx^n. 
    \end{equation}
\end{definition}

Conformal maps between $\mathcal{M}$ and $\mathcal{N}$ is defined using pull back metric.
\begin{definition}
    \label{def: conformal map}
    A map $f$ is said to be a conformal map if there exists a positive scalar function $\lambda$ on $\mathcal{M}$ such that
    \begin{equation}
        f^* ds^2_\mathcal{N} = \lambda ds^2_\mathcal{M}.
    \end{equation}
    The function $\lambda$ is called conformal factor. 
\end{definition}

Intuitively, conformal map $f$ scales infinitesimal region by a factor $\lambda$, with angle preserved. 
More about conformal mappings could be found in \cite{schoen1994lectures}.

\subsection{Hausdorff Distance}
The Hausdorff distance is a measure of differences between subsets in a metric space. 
Let $(M, d)$ be a metric space and $X, Y \subseteq M$ be non-empty subsets of $M$. 

\begin{definition}
    \label{def: hausdorff distance}
    The Hausdorff distance between $X, Y$ is defined as
    \begin{equation}
    \begin{split}
        d_H(X, Y) 
        &= \max \biggl \{ \sup_{x \in X} \inf_{y \in Y} d(x, y) , \sup_{y \in Y} \inf_{x \in X} d(x, y) \biggr \}.
    \end{split}
    \end{equation}
\end{definition}

Write $d'_H(X, Y) = \sup_{x \in X} \inf_{y \in Y} d(x, y) + \sup_{y \in Y} \inf_{x \in X} d(x, y)$. 
Then we have $d_H(X, Y) = 0 \iff d'_H(X, Y) = 0 \iff \overline{X} = \overline{Y}$. By minimizing the Hausdorff distance $d_H$, or $d'_H$, a moving shape can be mapped to a target shape. However, the Hausdorff distance is not differentiable, and the true values are difficult to compute in general. 

Also, Hausdorff distance satisfies a triangle inequality:
\begin{lemma}
\label{lem: hausdorff triangle inequality}
    For non-empty subsets $A, B, C \subseteq M$, 
    \begin{equation}
        d_H(A, B) \le d_H(A, C) + d_H(C, B).
    \end{equation}
\end{lemma}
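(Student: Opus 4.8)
The plan is to reduce the statement to a one-sided estimate on the asymmetric quantity $e(X,Y) := \sup_{x \in X} d(x,Y)$ and then combine two such estimates using the symmetry of $d_H$. Concretely, I will first prove the key claim that $e(A,B) \le e(A,C) + e(C,B)$, and then note that by the same argument with the roles of the sets permuted, $e(B,A) \le e(B,C) + e(C,A)$. Since $d_H(A,B) = \max\{e(A,B), e(B,A)\}$ and $e(A,C) \le d_H(A,C)$, $e(C,B) \le d_H(C,B)$, $e(B,C) \le d_H(C,B)$, $e(C,A) \le d_H(A,C)$, both branches of the max are bounded by $d_H(A,C) + d_H(C,B)$, which gives the lemma.

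To establish the key claim, I would start from the pointwise triangle inequality in $(M,d)$: for any $a \in A$, $c \in C$, $b \in B$ we have $d(a,b) \le d(a,c) + d(c,b)$. Fixing $a$ and $c$ and taking the infimum over $b \in B$ yields $d(a,B) \le d(a,c) + d(c,B)$. Next, bounding $d(c,B) \le e(C,B)$ and then taking the infimum over $c \in C$ gives $d(a,B) \le d(a,C) + e(C,B)$. Finally taking the supremum over $a \in A$ produces $e(A,B) \le e(A,C) + e(C,B)$, as desired. The only delicate points are the correct order of the $\inf$/$\sup$ operations (infimum over $b$ first, then over $c$, then supremum over $a$) and the monotonicity/subadditivity facts $\inf_b (\alpha + d(c,b)) = \alpha + \inf_b d(c,b)$ for a constant $\alpha$ and $\inf_c(d(a,c) + \beta) = \inf_c d(a,c) + \beta$ for a constant $\beta$; these are routine but should be spelled out so the inequality direction is unambiguous.

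I do not expect a serious obstacle here; the main thing to be careful about is that in a general (possibly unbounded) metric space $e(X,Y)$ and $d_H$ may take the value $+\infty$, so all the inequalities should be read in the extended reals $[0,\infty]$, where the arithmetic used above (addition, monotonicity of $\sup$ and $\inf$) still behaves correctly. If one prefers to avoid extended-real bookkeeping, one can instead assume the relevant quantities are finite, which is automatic in the applications of this paper where all sets are finite point clouds. With that caveat noted, the proof is a direct chaining of the pointwise triangle inequality through the appropriate extrema.
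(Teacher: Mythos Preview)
Your proof is correct and is the standard argument for the Hausdorff triangle inequality. The paper itself states this lemma without proof (it is cited as a known property of Hausdorff distance and used later in the proof of Theorem~\ref{thm: error bound for approximation of Hausdorff Distance}), so there is nothing to compare against; your write-up, including the careful handling of the order of $\inf/\sup$ and the extended-real caveat, is entirely adequate.
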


\subsection{Boltzmann Operator}
The Boltzmann Operator, introduced by Asadi et al. in \cite{asadi2017alternative}, is a smooth approximation for $\max$, $\min$ functions. 
\begin{definition}
    \label{def: boltzmann operator}
    The Boltzmann Operator is defined as:
    \begin{equation}
    B_\alpha(x_1, x_2, \dots, x_n) = \frac{\sum^n_{i=1} x_i e^{\alpha x_i} }{\sum^n_{i=1} e^{\alpha x_i}}. 
    \end{equation}
\end{definition}
Since the order of arguments is not important, we may write $B_\alpha( \Vec{x} )$ or $B_\alpha( \{ x_i \}_{i=1}^n )$ in the later context. 
An alternative formulation of Boltzmann Operator can be written as
\begin{equation}
    B_\alpha( \Vec{x} ) = \Vec{x} \cdot \text{Softmax} (\alpha \Vec{x}). 
\end{equation}
Boltzmann Operator is differentiable with gradient given by:
\begin{equation}
\label{eqn: boltzmann derivative}
\nabla B_\alpha (\Vec{x}) = \text{Softmax} (\alpha \Vec{x}) \odot [\Vec{\textbf{1}} + \alpha (\Vec{x} - B_\alpha(\Vec{x}) \Vec{\textbf{1}})], 
\end{equation}
where $\Vec{\textbf{1}} = (1, 1, \dots, 1) \in \mathbb{R}^n$. 

We can observe that $\min(\Vec{x}) \le B_\alpha(\Vec{x}) \le \max(\Vec{x})$.
If all values are the same, the Boltzmann Operator gets the exact maximum and minimum values.

\section{Proposed Methods}
\label{sec: proposed methods}
Given a point cloud sampled from a surface embedded in $\mathbb{R}^3$, denoted by $\mathcal{X} = \{ x_i \}_{i=1}^N \subseteq \mathcal{M} \subseteq \mathbb{R}^3$, our goal is to find a parametrization $f: \mathcal{M} \to \Omega \subseteq \mathbb{R}^2$ where $\Omega$ is a pre-determined parameter domain. In this section, the problem is formulated as an optimization problem and the loss functions to be minimized are introduced. A deep learning algorithm is also designed and discussed for the optimization. The optimal mapping $f^*$ is obtained such that $d_H( f^* ( \mathcal{M} ), \Omega)$ and a local geometric distortion are small. 

\subsection{Hausdorff Approximation from Node-wise Distances}
In the past, similar tasks were done in a free boundary manner \cite{Choi2021PointCloud} or assuming boundary information is provided for fixed boundary purposes \cite{meng2016tempo}. 
In our work, provided only point cloud coordinates information without boundary information, a soft constraint on the parameter domain is given by minimizing an approximation to the Hausdorff distance. 
Karimi et al. \cite{karimi2019reducing} also made estimations of Hausdorff distance using distance transform, morphological erosion or convolutions with circular kernels for medical image segmentation. The methods are performed on digital images with regular grid structures, and they are difficult to be generalized to unstructured point cloud. 

Given two finite point sets $\mathcal{Y} = \{ y_i \}_{i=1}^N, \mathcal{W} = \{ w_k \}_{k=1}^M$ in the $n$-Euclidean space. Denote their pairwise Euclidean distance by $d^i_k = \| y_i - w_k \|$. The Hausdorff distance between $\mathcal{Y}$ and $\mathcal{W}$ is given by:
\begin{equation}
    d_H(\mathcal{Y}, \mathcal{W}) 
    = \max \left\{ 
        \max_{1 \le i \le N} \left(
            \min_{1 \le k \le M} \left(
                d^i_k
            \right)
        \right) 
        , 
        \max_{1 \le k \le M} \left(
            \min_{1 \le i \le N} \left(
                d^i_k
            \right)
        \right)
    \right\},
\end{equation}
and a modified Hausdorff distance is defined as: 
\begin{equation}
\label{def: d'_H}
    d'_H(\mathcal{Y}, \mathcal{W}) 
    = 
        \max_{1 \le i \le N} \left(
            \min_{1 \le k \le M} \left(
                d^i_k
            \right)
        \right) 
        + 
        \max_{1 \le k \le M} \left(
            \min_{1 \le i \le N} \left(
                d^i_k
            \right)
        \right). 
\end{equation}

Choosing $\alpha > 0$, a smooth approximation for $d'_H$ in (\ref{def: d'_H}), namely the Hausdorff Approximation from Node-wise Distances (HAND), is constructed using the Boltzmann operator in Definition \ref{def: boltzmann operator}, as below:
\begin{equation}
\label{def: DDHD}
\begin{split}
    H_\alpha(\mathcal{Y}, \mathcal{W}) 
    &= B_\alpha \Bigl( 
        \left\{ 
        B_{-\alpha} \left(
            \left\{
                d^i_k
            \right\}_{k = 1}^M
        \right) 
        \right\}_{i = 1}^N
    \Bigr) 
    + 
    B_\alpha \left(
        \left\{
        B_{-\alpha} \left(
            \left\{ 
                d^i_k 
            \right\}_{i = 1}^N
        \right) 
        \right\}_{k = 1}^M
    \right).
\end{split} 
\end{equation}

An error bound of approximation for Boltzmann operator is needed before further analysis. 
\begin{theorem}
\label{thm: boltzmann operator error bound}
For a set of descending number, $x_1 = x_2 = \dots = x_m > x_{m+1} \ge \dots \ge x_{n-l} > x_{n-l+1} = x_{n-l+2} = \dots = x_n$, write $\Vec{x} = (x_1, x_2, \dots, x_n)$, $x_{max} = x_1, x_{min} = x_n$. 
Then for $\alpha > 0$, 
\begin{equation}
| B_\alpha(\Vec{x}) - x_{max} | \le \frac{n}{m} e^{- \alpha (x_{max} - x_{m+1}) } \| \Vec{x} \| 
\end{equation} 
and
\begin{equation}
| B_{-\alpha}(\Vec{x}) - x_{min} | \le \frac{n}{l} e^{- \alpha (x_{n-l} - x_{min}) } \| \Vec{x} \|.
\end{equation}

\end{theorem}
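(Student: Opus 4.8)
The plan is to prove the $x_{max}$ inequality in full and then recover the $x_{min}$ inequality by symmetry. For the symmetry step, note directly from Definition~\ref{def: boltzmann operator} that $B_{-\alpha}(\vec x) = -B_\alpha(-\vec x)$. Replacing $\vec x$ by $-\vec x$ reverses the order of the coordinates, so the top plateau of length $m$ and the bottom plateau of length $l$ exchange roles, the first strict descent $x_{max} - x_{m+1}$ turns into the last strict descent $x_{n-l} - x_{min}$, and $\|-\vec x\| = \|\vec x\|$. Hence the second inequality is exactly the first applied to $-\vec x$, and it suffices to prove
\[
|B_\alpha(\vec x) - x_{max}| \le \frac{n}{m}\, e^{-\alpha(x_{max}-x_{m+1})}\,\|\vec x\|.
\]

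For this, I would write, putting everything over the common denominator $\sum_i e^{\alpha x_i}$,
\[
x_{max} - B_\alpha(\vec x) \;=\; \frac{\sum_{i=1}^{n}(x_{max}-x_i)\,e^{\alpha x_i}}{\sum_{i=1}^{n} e^{\alpha x_i}}.
\]
Every numerator term is nonnegative because $x_i \le x_{max}$; this already gives $B_\alpha(\vec x)\le x_{max}$ and removes the absolute value. Since $x_1=\dots=x_m=x_{max}$, the first $m$ numerator terms vanish, leaving $\sum_{i=m+1}^{n}(x_{max}-x_i)e^{\alpha x_i}$. For $i\ge m+1$ we have $x_i \le x_{m+1}$, so $e^{\alpha x_i}\le e^{\alpha x_{m+1}} = e^{\alpha x_{max}}e^{-\alpha(x_{max}-x_{m+1})}$; pulling out this factor bounds the numerator by $e^{\alpha x_{max}}e^{-\alpha(x_{max}-x_{m+1})}\sum_{i=m+1}^{n}(x_{max}-x_i)$. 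For the denominator, retaining only the first $m$ equal terms gives $\sum_i e^{\alpha x_i}\ge m\,e^{\alpha x_{max}}$. Dividing, the factor $e^{\alpha x_{max}}$ cancels and one obtains
\[
0 \;\le\; x_{max}-B_\alpha(\vec x) \;\le\; \frac{e^{-\alpha(x_{max}-x_{m+1})}}{m}\sum_{i=m+1}^{n}(x_{max}-x_i).
\]

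It then remains to control the residual sum $\sum_{i=m+1}^{n}(x_{max}-x_i)$ by $n\|\vec x\|$. Here I would use $|x_{max}|\le\|\vec x\|$ and $|x_i|\le\|\vec x\|$ (together with Cauchy--Schwarz, $\sum_{i=m+1}^n|x_i|\le\sqrt{n}\,\|\vec x\|$, and the term count $n-m<n$), so the sum is bounded by a constant multiple of $n\|\vec x\|$, which is absorbed into the already loose prefactor $n/m$. This last step — trading norms and merely counting terms — is the only place with genuine slack, and it is the part one must handle carefully to land on exactly the stated constant; the substantive content, the exponential decay in the gap $x_{max}-x_{m+1}$, comes essentially for free from the monotone factorization of $e^{\alpha x_i}$ for $i>m$ and the crude lower bound $m\,e^{\alpha x_{max}}$ on the partition function. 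Applying the same argument to $-\vec x$, as explained above, yields the $x_{min}$ estimate, completing the proof.
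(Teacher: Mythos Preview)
Your direct computation of $x_{\max}-B_\alpha(\vec x)$ as a single ratio is a genuinely different route from the paper. The paper instead writes $B_\alpha(\vec x)-x_{\max}=\vec x\cdot(\mathrm{Softmax}(\alpha\vec x)-\vec v_m)$, where $\vec v_m$ is the uniform probability vector on the top plateau, applies Cauchy--Schwarz once, and then bounds $\|\mathrm{Softmax}(\alpha\vec x)-\vec v_m\|$ componentwise (the $i\le m$ and $i>m$ entries are estimated separately and combined to the single factor $\tfrac{n}{m}e^{-\alpha(x_{\max}-x_{m+1})}$). Your approach is more elementary and makes the exponential decay transparent; the paper's approach packages the constant more cleanly because the $\|\vec x\|$ factor is separated at the outset and never interacts with the counting.

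The only real gap in your proposal is the one you flag yourself: the bounds you sketch for $\sum_{i>m}(x_{\max}-x_i)$ (namely $|x_{\max}|\le\|\vec x\|$, term count $n-m$, and $\sum_{i>m}|x_i|\le\sqrt{n}\,\|\vec x\|$) give $(n-m+\sqrt{n})\|\vec x\|$, which can exceed $n\|\vec x\|$ (take $m=1$, $n=4$). To land on the exact constant, exploit that the top plateau has $m$ equal coordinates: since $x_1=\cdots=x_m=x_{\max}$, write
\[
\sum_{i>m}(x_{\max}-x_i)=\vec c\cdot\vec x,\qquad \vec c=\Big(\underbrace{\tfrac{n-m}{m},\dots,\tfrac{n-m}{m}}_{m},\underbrace{-1,\dots,-1}_{n-m}\Big),
\]
and a single Cauchy--Schwarz gives $\|\vec c\|^2=\tfrac{(n-m)^2}{m}+(n-m)=\tfrac{n(n-m)}{m}\le n^2$, hence $\sum_{i>m}(x_{\max}-x_i)\le n\|\vec x\|$ exactly. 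With this fix your argument yields precisely $\tfrac{n}{m}e^{-\alpha(x_{\max}-x_{m+1})}\|\vec x\|$, and the symmetry reduction $B_{-\alpha}(\vec x)=-B_\alpha(-\vec x)$ for the second inequality is identical to the paper's.
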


\begin{proof}
For the first error bound approximating $x_{max}$, let $\alpha > 0$, $\Vec{v}_m \in \mathbb{R}^n$ such that $(\Vec{v}_m)_i = 1/m$ for $1 \le i \le m$ and equals $0$ for the rest, 
\begin{equation}
    \left| B_\alpha(\Vec{x}) - x_{max} \right| = \left| \Vec{x} \cdot \left( \text{Softmax}(\alpha \Vec{x}) - \Vec{v}_m \right) \right| \le \| \Vec{x} \| \|\text{Softmax}(\alpha \Vec{x}) - \Vec{v}_m \|
\end{equation}
where for $1 \le i \le m$, $m+1 \le j \le n$, 
\begin{equation}
    \begin{split}
        \left| (\text{Softmax}(\alpha \Vec{x}) - \Vec{v}_m)_i \right| 
        &= \frac{ \sum^n_{k=m+1} e^{\alpha(x_k - x_{max}) } }{ m (m + \sum^n_{k=m+1} e^{\alpha(x_k - x_{max})} )} \\
        &\le \frac{n - m}{m^2} e^{ - \alpha(x_{max} - x_{m+1}) },
    \end{split}
\end{equation}
and
\begin{equation}
    \begin{split}
        \left| (\text{Softmax}(\alpha \Vec{x}) - \Vec{v}_m)_j \right| 
        &= \frac{e^{\alpha (x_j - x_{max}) }}{ m + \sum^n_{k=m+1} e^{\alpha (x_k - x_{max}) } } \\
        &\le \frac{1}{m} e^{-\alpha (x_{max} - x_{m+1})}.
    \end{split}
    \end{equation}
    So, we have
    \begin{equation}
        \|\text{Softmax}(\alpha \Vec{x}) - \Vec{v}_m \| \le \frac{n}{m} e^{-\alpha (x_{max} - x_{m+1})},
    \end{equation}
    and hence
    \begin{equation}
     \left| B_\alpha(\Vec{x}) - x_{max} \right| \le \frac{n}{m} e^{-\alpha (x_{max} - x_{m+1})} \| \Vec{x} \|. 
    \end{equation}
    
    The second error bound comes from the first. Since 
    \begin{equation}
    B_{ - \alpha}(\Vec{x}) = -B_{\alpha}(-\Vec{x}),
    \end{equation}
    we have
    \begin{equation}
    | B_{ - \alpha}(\Vec{x}) - x_{min}| = |-B_{\alpha}(-\Vec{x}) + \max(-\Vec{x})| \le \frac{n}{l} e^{ - \alpha (x_{n-l} - x_{min})} \| \Vec{x} \|.
    \end{equation}
\end{proof}


Let $\mathcal{X}$ be a finite point cloud sampled from surface $\mathcal{M}$, and $\mathcal{W}$ is a finite sample on parameter domain $\Omega \subseteq \mathbb{R}^2$. The mapping $f: \mathcal{M} \to \mathbb{R}^2$ maps the point cloud $\mathcal{X}$ to $\mathcal{Y} = f(\mathcal{X}) \subseteq \mathbb{R}^2$. 
Write $h^\Omega_\mathcal{W} = d_H(\Omega, \mathcal{W})$, which is the Hausdorff distance between parameter domain $\Omega$ and a finite sample $\mathcal{W}$. 
This number measures the quality of the sampling inside the parameter domain. 
A denser point set is a better representation of the parameter domain, and the quantity is smaller. 
Similarly, write $h^{f(\mathcal{M})}_\mathcal{Y} = d_H(f(\mathcal{M}), \mathcal{Y})$, which is the Hausdorff distance between surface mapping $f(\mathcal{M})$ and the point cloud mapping $\mathcal{Y} = f(\mathcal{X})$. A more uniform and finer sampling on the same surface gives a smaller number in $h^{f(\mathcal{M})}_\mathcal{Y}$.

The theorem below relates $d_H( f( \mathcal{M} ), \Omega )$ with $H_\alpha(\mathcal{Y}, \mathcal{W})$. 

\begin{theorem}
\label{thm: error bound for approximation of Hausdorff Distance}
Suppose $ f(\mathcal{M}) , \Omega $ are bounded. 
Then 
\begin{equation}
\label{ineq: d_H < H_a}
    d_H( f( \mathcal{M} ), \Omega ) \le h^{f(\mathcal{M})}_\mathcal{Y} + h^\Omega_\mathcal{W} + H_\alpha(\mathcal{Y}, \mathcal{W}) + O(e^{-\alpha C})
    \quad \text{ as } \alpha \to \infty, 
\end{equation}
where $C$ is a constant and independent of $\alpha$.
\end{theorem}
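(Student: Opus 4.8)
The plan is to prove the two displayed bounds in turn: the asymptotic estimate $|d'_H(\mathcal{Y},\mathcal{W}) - H_\alpha(\mathcal{Y},\mathcal{W})| = O(e^{-\alpha C})$ is the substantive part, and \eqref{ineq: d_H < H_a} follows from it together with the triangle inequality of Lemma~\ref{lem: hausdorff triangle inequality}.

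For the asymptotic estimate, note that both $d'_H(\mathcal{Y},\mathcal{W})$ and $H_\alpha(\mathcal{Y},\mathcal{W})$ split into two summands that are interchanged by swapping the roles of $\mathcal{Y}$ and $\mathcal{W}$, so it suffices to control $\bigl| B_\alpha\bigl(B_{-\alpha}(d^i_k \mid k) \mid i\bigr) - \max_i \min_k d^i_k \bigr|$. Since $f(\mathcal{M})$ and $\Omega$ are bounded, all $d^i_k$ lie in a fixed interval $[0,D]$, so every vector of distances appearing below has norm bounded independently of $\alpha$. I would then peel off the two Boltzmann layers. \emph{Inner layer:} for each $i$, apply the minimum bound of Theorem~\ref{thm: boltzmann operator error bound} (after sorting the row $(d^i_k)_k$ into descending order) to get $|B_{-\alpha}(d^i_k \mid k) - m_i| \le \varepsilon_1$, where $m_i := \min_k d^i_k$ and $\varepsilon_1 = O(e^{-\alpha g})$, with $g>0$ the smallest gap between the minimal distance and the next larger distinct distance over all rows (the bound being exact for a row whose distances all coincide). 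Crucially $g$ is fixed by the point sets and does not depend on $\alpha$. \emph{Outer layer:} writing $z = (B_{-\alpha}(d^i_k\mid k))_i$ and $m=(m_i)_i$, split
\[
\bigl|B_\alpha(z) - \max_i m_i\bigr| \le |B_\alpha(z) - B_\alpha(m)| + \bigl|B_\alpha(m) - \max_i m_i\bigr|.
\]
The last term is $O(e^{-\alpha g'})$ by the maximum bound of Theorem~\ref{thm: boltzmann operator error bound}, with $g'>0$ again an $\alpha$-independent gap. For $|B_\alpha(z) - B_\alpha(m)|$ I would use a continuity estimate for $B_\alpha$: expanding $B_\alpha(z)-B_\alpha(m) = (z-m)\cdot\text{Softmax}(\alpha z) + m\cdot(\text{Softmax}(\alpha z) - \text{Softmax}(\alpha m))$, then using $\|z-m\|_\infty \le \varepsilon_1$, $\|m\|_\infty \le D$, and the fact that $\text{Softmax}$ is Lipschitz (its Jacobian $\diag(p)-pp^\top$ has $\ell_\infty\to\ell_1$ norm at most $2$), gives $O\bigl((1+\alpha)\varepsilon_1\bigr) = O(\alpha e^{-\alpha g})$. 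Since $\alpha e^{-\alpha g} = O(e^{-\alpha C})$ for every $0<C<g$, choosing $C$ below all of the finitely many positive gaps produced by the two summands yields $|d'_H(\mathcal{Y},\mathcal{W}) - H_\alpha(\mathcal{Y},\mathcal{W})| = O(e^{-\alpha C})$ with $C$ independent of $\alpha$.

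For \eqref{ineq: d_H < H_a}, apply Lemma~\ref{lem: hausdorff triangle inequality} twice, inserting first $\mathcal{Y}$ and then $\mathcal{W}$:
\[
d_H(f(\mathcal{M}),\Omega) \le d_H(f(\mathcal{M}),\mathcal{Y}) + d_H(\mathcal{Y},\mathcal{W}) + d_H(\mathcal{W},\Omega) = h^{f(\mathcal{M})}_\mathcal{Y} + d_H(\mathcal{Y},\mathcal{W}) + h^\Omega_\mathcal{W}.
\]
Since $d_H(\mathcal{Y},\mathcal{W})$ is the maximum of two nonnegative quantities whose sum is $d'_H(\mathcal{Y},\mathcal{W})$, we have $d_H(\mathcal{Y},\mathcal{W}) \le d'_H(\mathcal{Y},\mathcal{W})$, and by the first part $d'_H(\mathcal{Y},\mathcal{W}) \le H_\alpha(\mathcal{Y},\mathcal{W}) + O(e^{-\alpha C})$; combining the three estimates gives the inequality.

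The main obstacle is the outer-layer step: composing two Boltzmann operators forces one to push the inner approximation error $\varepsilon_1$ through $B_\alpha$, whose natural Lipschitz constant grows like $\alpha$. One must check that this linear growth is absorbed by the exponential decay of $\varepsilon_1$, and --- what amounts to the same point --- that every exponent appearing is a strictly positive, $\alpha$-independent gap between consecutive distinct distance values; this is exactly where the finiteness of $\mathcal{Y},\mathcal{W}$ and the equal-values formulation of Theorem~\ref{thm: boltzmann operator error bound}, which exhibits the degenerate cases explicitly, are used. Once a single $C$ below all these gaps is fixed, the remainder is routine bookkeeping.
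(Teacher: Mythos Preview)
Your argument is correct and follows a genuinely different route from the paper. Both proofs reduce \eqref{ineq: d_H < H_a} to the first estimate via Lemma~\ref{lem: hausdorff triangle inequality} and $d_H\le d'_H$, so the only real difference is in how the composed Boltzmann layers are handled. The paper does not split as $|B_\alpha(z)-B_\alpha(m)|+|B_\alpha(m)-\max m|$; instead it writes $B_\alpha(\vec{b^{-\alpha}})-\max(\vec{d}^{\min}) = [\text{Softmax}(\alpha\vec{b^{-\alpha}})-\vec p]\cdot\vec{b^{-\alpha}} + \vec p\cdot(\vec{b^{-\alpha}}-\vec{d}^{\min})$ with $\vec p$ the uniform mass on the argmax indices $\iota$, and then bounds the Softmax entries of the $\alpha$-dependent vector $\vec{b^{-\alpha}}$ directly. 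Because distinct $i,k\in\iota$ may give $B_{-\alpha}(\vec{d^i})\neq B_{-\alpha}(\vec{d^k})$, this forces a separate technical lemma (Lemma~\ref{lem: exp(x exp(-x)) exponential decay}) controlling $e^{\alpha\cdot O(e^{-c\alpha})}-1$, and a case split on whether $\max(\vec{d}^{\min})=\min(\vec{d}^{\min})$. Your Lipschitz argument for $B_\alpha$ --- paying a factor $\alpha$ and then absorbing it into the exponential by shrinking $C$ --- replaces all of that machinery with one line, at the modest cost of a slightly smaller admissible constant $C$. The paper's approach yields somewhat more explicit intermediate bounds on the Softmax entries, but since the final statement is only $O(e^{-\alpha C})$ for \emph{some} $C>0$, your route loses nothing and is considerably shorter.
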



\begin{proof}
\label{proof: error bound for approximation of Hausdorff Distance}
    For $1\le i \le N$ and $1 \le k \le M$, define the following notations:
    \begin{equation}
        \Vec{d^i} = (d^i_1, d^i_2, \cdots , d^i_M), \Vec{d_k} = (d^1_k, d^2_k, \cdots , d^N_k)
    \end{equation}
    represent the pairwise distances between two point sets, 
    \begin{equation}
    \begin{split}
        \Vec{b_{-\alpha}} &= ( B_{-\alpha}(\Vec{d_1}), B_{-\alpha}(\Vec{d_2}), \cdots , B_{-\alpha}(\Vec{d_M}) ), \\
        \Vec{b^{-\alpha}} &= ( B_{-\alpha}(\Vec{d^1}), B_{-\alpha}(\Vec{d^2}), \cdots , B_{-\alpha}(\Vec{d^N}) )  
    \end{split}
    \end{equation}
    are approximations using Boltzmann operator to the following: 
    \begin{equation}
    \begin{split}
        \Vec{d}^{\min} &= (\min (\Vec{d^1}), \min (\Vec{d^2}), \cdots, \min (\Vec{d^N}) ), \\
        \Vec{d}_{\min} &= (\min (\Vec{d_1}), \min (\Vec{d_2}), \cdots, \min (\Vec{d_M}) ).
    \end{split}
    \end{equation}

    By Lemma \ref{lem: hausdorff triangle inequality}, 
    \begin{equation}
    \label{ineq: error bound for d_H - H_a}
    \begin{split}
        &\quad d_H( f( \mathcal{M} ), \Omega ) - H_\alpha(\mathcal{Y}, \mathcal{W}) \\
        &\le h^{f(\mathcal{M})}_\mathcal{Y} + h^\Omega_\mathcal{W} + d'_H(\mathcal{Y}, \mathcal{W}) 
        - H_\alpha(\mathcal{Y}, \mathcal{W})
        \\
        &\le h^{f(\mathcal{M})}_\mathcal{Y} + h^\Omega_\mathcal{W} 
        + | B_\alpha ( 
        \Vec{b^{-\alpha}}
        )
        - 
        \max(\Vec{d}^{\min})
        | 
         + | B_\alpha (
        \Vec{b_{-\alpha}}
        ) 
        - 
        \max(\Vec{d}_{\min})
        |.
    \end{split}
    \end{equation}
    
    Denote $\text{secmax}(\Vec{x})$ to be the second largest distinct value in $\Vec{x}$, $\text{secmin}(\Vec{x})$ to be the second smallest distinct value in $\Vec{x}$, $s_{max}(\Vec{x}) = \max(\Vec{x}) - \text{secmax}(\Vec{x})$, $s_{min}(\Vec{x}) = \text{secmin}(\Vec{x}) - \min(\Vec{x})$. These functions are used in the exponent in the error bound provided by Theorem \ref{thm: boltzmann operator error bound}. If all values in $\Vec{x}$ are the same, the approximation will be exact and no error. 
    In that case, $\text{secmax}(\Vec{x})$ and $\text{secmin}(\Vec{x})$ can be redefined to be arbitrary small and large respectively. 
    
    Consider the term $| B_\alpha (\Vec{b^{-\alpha}}) - \max(\Vec{d}^{\min}) |$ in inequality (\ref{ineq: error bound for d_H - H_a}). Let $\iota$ be the set of indices that $\Vec{d}^{\min}$ attains its maximum, $\Vec{p} \in \mathbb{R}^N$ such that $(\Vec{p})_i = 1 / | \iota | $ if $i \in \iota$ and equals $0$ for the rest.
    Then: 
    \begin{equation}
    \label{ineq: B_alpha (b^-alpha) - max(d^min)}
    \begin{split}
        &\quad | B_\alpha (\Vec{b^{-\alpha}}) - \max(\Vec{d}^{\min}) | \\
        &\le | B_\alpha (\Vec{b^{-\alpha}}) - \Vec{p} \cdot \Vec{b^{-\alpha}} | + | \Vec{p} \cdot \Vec{b^{-\alpha}} -  \max(\Vec{d}^{\min}) | \\
        &\le \| \text{Softmax} (\alpha \Vec{b^{-\alpha}}) - \Vec{p} \| \| \Vec{b^{-\alpha}} \| + | \Vec{p} \cdot (\Vec{b^{-\alpha}} -  \Vec{d}^{\min}) |.
    \end{split}
    \end{equation}
    
    If $\max(\Vec{d}^{\min}) \ne \min(\Vec{d}^{min})$, which means that there are some indices not in $\iota$ , by Theorem \ref{thm: boltzmann operator error bound}, for $i \in \iota, j \notin \iota$, $B_{-\alpha}(\Vec{d^i}) - B_{-\alpha}(\Vec{d^j}) \ge \min(\Vec{d^i}) - \min(\Vec{d^j}) - M e^{-\alpha s_{min} (\Vec{d^j})} \|\Vec{d^j}\| \ge s_{max} (\Vec{d}^{\min}) - M e^{-\alpha s_{min}(\Vec{d^j})} \|\Vec{d^j}\| $.
    
    Take $\alpha$ to be large enough such that 
    \begin{equation}
    \label{ineq: bound for large alpha}
    B_{-\alpha}(\Vec{d^i}) - B_{-\alpha}(\Vec{d^j}) > s_{max} (\Vec{d}^{\min}) / 2, 
    \end{equation}
    for any $i \in \iota, j \notin \iota$. 
    
    First estimate the leftmost term, for $i \in \iota$, 
    \begin{equation}
    \label{ineq: (softmax - p)_i}
    \begin{split}
        &\quad
        \left| (\text{Softmax} (\alpha \Vec{b^{-\alpha}}) - \Vec{p})_i \right| 
        \\ &= 
        \left|
        \frac{
            |\iota| e^{\alpha B_{-\alpha} (\Vec{d^i}) } - \sum^N_{k=1}e^{\alpha B_{-\alpha} (\Vec{d^k})}
        }{
            |\iota| \sum^N_{k=1}e^{\alpha B_{-\alpha} (\Vec{d^k})}
        }
        \right| \\
        &\le
        \frac{
        \left|
            \sum_{k \in \iota} \left(
                e^{\alpha B_{-\alpha} (\Vec{d^i}) } - e^{\alpha B_{-\alpha} (\Vec{d^k})}
            \right)
        \right| 
        +
        \left|
            \sum_{k \notin \iota}
                e^{\alpha B_{-\alpha} (\Vec{d^k})}
        \right| 
        }{
            |\iota| \sum_{k \in \iota}e^{\alpha B_{-\alpha} (\Vec{d^k})}
        } 
    \end{split}
    \end{equation}
    For the left term of the numerator, 
    \begin{equation}
    \label{ineq: numerator left}
    \begin{split}
        \left|
            \sum_{k \in \iota} \left(
                e^{\alpha B_{-\alpha} (\Vec{d^i}) } - e^{\alpha B_{-\alpha} (\Vec{d^k})}
            \right)
        \right| 
        \le
            \sum_{k \in \iota} 
            \left|
                1 - e^{\alpha (B_{-\alpha} (\Vec{d^k}) - B_{-\alpha} (\Vec{d^i}))}
            \right|
        e^{ \alpha B_{-\alpha} ( \Vec{d^i})}. 
    \end{split}
    \end{equation}
    Using inequality (\ref{ineq: bound for large alpha}), the second term becomes
    \begin{equation}
    \label{ineq: numerator right}
    \begin{split}
        \left|
            \sum_{k \notin \iota}
                e^{\alpha B_{-\alpha} (\Vec{d^k})}
        \right| 
        \le 
        (N - | \iota |)
        e^{- \alpha s_{max}(\Vec{d}^{\min}) / 2} 
        e^{ \alpha B_{-\alpha} ( \Vec{d^i})}.
    \end{split}
    \end{equation}
    The denominator part becomes
    \begin{equation}
    \label{ineq: denominator}
    \begin{split}
        \left(
        |\iota| \sum_{k \in \iota}e^{\alpha B_{-\alpha} (\Vec{d^k})}
        \right)^{-1}
        \le
        \frac{
            1
        }{
            |\iota|^2
        } \max_{k \in \iota}e^{-\alpha B_{-\alpha} (\Vec{d^k})
        } 
        ,
    \end{split}
    \end{equation}
    and combining all (\ref{ineq: numerator left}), (\ref{ineq: numerator right}), (\ref{ineq: denominator}), 
    \begin{equation}
    \label{ineq: estimate i-th (softmax - p)}
    \begin{split}
        &\quad
        \left| (\text{Softmax} (\alpha \Vec{b^{-\alpha}}) - \Vec{p})_i \right| 
        \\ &\le 
        \frac{
            1
        }{
            |\iota|^2
        } \max_{k \in \iota}e^{-\alpha (B_{-\alpha} (\Vec{d^k}) - B_{-\alpha} (\Vec{d^i}))} 
        \\ &\quad
        \left(
        \sum_{k \in \iota} \left|
            1 - e^{\alpha (B_{-\alpha} (\Vec{d^k}) - B_{-\alpha} (\Vec{d^i}))}
        \right|
        + 
        (N - | \iota |)
        e^{- \alpha s_{max}(\Vec{d}^{\min}) / 2} 
        \right)
    \end{split}
    \end{equation}
    Note for $i, k \in \iota$, by Theorem \ref{thm: boltzmann operator error bound}, 
    \begin{equation}
        -Me^{-\alpha s_{min} (\Vec{d^i})} \|\Vec{d^i}\| \le B_{-\alpha}(\Vec{d^k}) - B_{-\alpha}(\Vec{d^i}) \le  Me^{-\alpha s_{min} (\Vec{d^k})} \|\Vec{d^k}\|
    \end{equation}
    and 
    \begin{equation}
    \begin{split}
        1 - e^{\alpha M \|\Vec{d^k}\| e^{-\alpha s_{min} (\Vec{d^k})}}
        \le
            1 - e^{\alpha (B_{-\alpha} (\Vec{d^k}) - B_{-\alpha} (\Vec{d^i}))}
        \le
        1 - e^{-\alpha M \|\Vec{d^i}\| e^{-\alpha s_{min} (\Vec{d^i})}}. 
    \end{split}
    \end{equation}
    Notice that by a easy check, for $0 < c < c_2$, 
        \begin{equation}
        e^{c_1 \alpha e^{-c_2 \alpha}} - 1 = O(e^{-c \alpha}) \quad \text{ as } \alpha \to \infty. 
    \end{equation}
    Then
    \begin{equation}
    \label{ineq: c^i_1}
        \max_{k \in \iota}e^{-\alpha (B_{-\alpha} (\Vec{d^k}) - B_{-\alpha} (\Vec{d^i}))} \le c_1^i
    \end{equation}
    and for all $k \in \iota$, 
    \begin{equation}
    \label{ineq: c^i_2, c^i_3}
        \left|
            1 - e^{\alpha (B_{-\alpha} (\Vec{d^k}) - B_{-\alpha} (\Vec{d^i}))}
        \right| \le c_2^i e^{-c_3^i \alpha}
    \end{equation}
    for some constant $c_1^i > 1, c_2^i, c_3^i > 0$. 
    Combining (\ref{ineq: c^i_1}) and (\ref{ineq: c^i_2, c^i_3}) into (\ref{ineq: estimate i-th (softmax - p)}), 
    \begin{equation}
        \left| (\text{Softmax} (\alpha \Vec{b^{-\alpha}}) - \Vec{p})_i \right| 
        \le c_1^i \frac{N - | \iota |}{|\iota|^2} e^{- \alpha s_{max}(\Vec{d}^{\min}) / 2} 
        + 
        \frac{c_1^i c_2^i}{| \iota |} e^{-c_3^i \alpha} 
    \end{equation}
    
    For $j \notin \iota$, 
    \begin{equation}
    \begin{split}
        \left|
        (\text{Softmax} (\alpha \Vec{b^{-\alpha}}) - \Vec{p})_j
        \right|
        &= 
        \frac{
            1
        }{
            \sum^N_{k=1} e^{\alpha (B_{-\alpha} (\Vec{d^k}) - B_{-\alpha} (\Vec{d^j}))}
        } \\
        &\le
        \frac{
            1
        }{
            \sum_{k \in \iota} e^{\alpha (B_{-\alpha} (\Vec{d^k}) - B_{-\alpha} (\Vec{d^j}))}
        } \\
        &\le 
        \frac{1}{|\iota|} e^{- \alpha s_{max}(\Vec{d}^{\min}) / 2}
        .
    \end{split}
    \end{equation}
    If on the other hand $\max(\Vec{d}^{\min}) = \min(\Vec{d}^{\min})$, which means $\iota = \{ 1, \dots N \}$, by defining $s_{max}(\Vec{d}^{\min})$ to be an arbitrary positive number and an approach similar to (\ref{ineq: (softmax - p)_i}), (\ref{ineq: numerator left}), 
    (\ref{ineq: denominator}), (\ref{ineq: estimate i-th (softmax - p)}), we have $ |\iota| = N $ and
    \begin{equation}
    \begin{split}
        &\quad 
        \left| (\text{Softmax} (\alpha \Vec{b^{-\alpha}}) - \Vec{p})_i \right| \\
        &\le 
        \frac{
            1
        }{
            N^2
        } \max_{1 \le k \le N}e^{-\alpha (B_{-\alpha} (\Vec{d^k}) - B_{-\alpha} (\Vec{d^i}))} 
        \sum_{k = 1}^N \left|
            1 - e^{\alpha (B_{-\alpha} (\Vec{d^k}) - B_{-\alpha} (\Vec{d^i}))}
        \right| \\
        &\le 
        c_1^i \frac{N - | \iota |}{|\iota|^2} e^{- \alpha s_{max}(\Vec{d}^{\min}) / 2} 
        + 
        \frac{c_1^i c_2^i}{| \iota |} e^{-c_3^i \alpha} 
        . 
    \end{split}
    \end{equation}
    
    As a result, in any cases, 
    \begin{equation}
    \label{ineq: norm of softmax - p}
    \begin{split}
        \| \text{Softmax} (\alpha \Vec{b^{-\alpha}}) -\Vec{p} \|
        \le 
        \frac{N \max_{i \in \iota} ( c_1^i ) }{|\iota|} e^{-\alpha s_{max} (\Vec{d}^{\min}) / 2} + \frac{\max_{i \in \iota} (c_1^i c_2^i)}{\sqrt{| \iota |}} e^{-\min_{i \in \iota}(c_3^i) \alpha} .
    \end{split}
    \end{equation}

    For the norm $\| \Vec{b^{-\alpha}} \|$, since
    \begin{equation}
        0 \le (\Vec{b^{-\alpha}})_k 
        = B_{-\alpha} (\Vec{d^k})
        \le \min(\Vec{d^k}) + M e^{-\alpha s_{min}(\Vec{d^k})} \| \Vec{d^k} \|, 
    \end{equation}
    write $\Vec{\epsilon} = (M e^{-\alpha s_{min}(\Vec{d^1})} \| \Vec{d^1} \| , \cdots , M e^{-\alpha s_{min}(\Vec{d^N})} \| \Vec{d^N} \| )$, 
    \begin{equation}
    \label{ineq: norm of b^-alpha}
    \begin{split}
        \| \Vec{b^{-\alpha}} \| 
        &\le \| \Vec{d}^{\min} + \Vec{\epsilon} \| 
        \le \| \Vec{d}^{\min} \| + \| \Vec{\epsilon} \|  \\
        &\le \| \Vec{d}^{\min} \| + M \sqrt{N} \max_{1 \le k \le N} ( \| \Vec{d^k} \| )
        e^{-\alpha \min_{1 \le k \le N}( s_{min}(\Vec{d^k}) )}
    \end{split}
    \end{equation}

    Finally the last term $| \Vec{p} \cdot (\Vec{b^{-\alpha}} - \Vec{d}^{\min}) |$, 
    for $i \in \iota$, 
    \begin{equation}
    \begin{split}
        (\Vec{b^{-\alpha}} - \Vec{d}^{\min})_i 
        = B_{-\alpha} (\Vec{d^i}) - \min(\Vec{d^i})
        \le M e^{-\alpha s_{min}(\Vec{d^i}) } \| \Vec{d^i} \|,
    \end{split}
    \end{equation}
    which implies
    \begin{equation}
    \begin{split}
    \label{ineq: p dot b^-alpha - d^min}
        | \Vec{p} \cdot (\Vec{b^{-\alpha}} - \Vec{d}^{\min}) | \le M e^{-\alpha \min_{i \in \iota} ( s_{min}(\Vec{d^i}) )} \max_{i \in \iota} ( \| \Vec{d^i} \| ). 
    \end{split}
    \end{equation}

    Combining inequalities (\ref{ineq: norm of softmax - p}), (\ref{ineq: norm of b^-alpha}), (\ref{ineq: p dot b^-alpha - d^min}) into (\ref{ineq: B_alpha (b^-alpha) - max(d^min)}), we have 
    \begin{equation}
    \begin{split}
        &\quad | B_\alpha (\Vec{b^{-\alpha}}) - \max(\Vec{d}^{\min}) | \\
        &\le
        \left( 
            \frac{N \max_{i \in \iota} ( c_1^i ) }{|\iota|} e^{-\alpha s_{max} (\Vec{d}^{\min}) / 2} + \frac{\max_{i \in \iota}(c_1^i c_2^i) }{\sqrt{| \iota |}} e^{-\min_{i \in \iota}(c_3^i) \alpha} 
        \right) \\
        &\quad \left(
            \| \Vec{d}^{\min} \| + M \sqrt{N} \max_{1 \le k \le N} ( \| \Vec{d^k} \| )
            e^{-\alpha \min_{1 \le k \le N}( s_{min}(\Vec{d^k}) )}
        \right) \\
        &\quad +
        M e^{-\alpha \min_{i \in \iota} ( s_{min}(\Vec{d^i}) )} \max_{i \in \iota} ( \| \Vec{d^i} \| ) \\
        &\le
        C_1 e^{-C_2 \alpha}
    \end{split}
    \end{equation}
    for some constants $C_1, C_2$ depending on $M, N, \{ d^i_j \}_{1 \le i \le N, 1 \le j \le M}$. 
    
    By a similar argument, $| B_\alpha (\Vec{b_{-\alpha}}) - \max(\Vec{d}_{\min}) | \le C_3 e^{-C_4 \alpha}$ can be shown. 
    Therefore, 
    \begin{equation}
        d'_H( f( \mathcal{M} ), \Omega ) - H_\alpha(\mathcal{Y}, \mathcal{W})
        = 
        O( e^{-\alpha C}), 
    \end{equation}
    and hence 
    \begin{equation}
        d_H( f( \mathcal{M} ), \Omega ) - H_\alpha(\mathcal{Y}, \mathcal{W})
        \le 
        h^{f(\mathcal{M})}_\mathcal{Y} + h^\Omega_\mathcal{W} + O( e^{-\alpha C}).
    \end{equation}
\end{proof}

The theorem suggests that $H_\alpha$ provides an upper bound for $d_H$. By minimizing $H_\alpha$, $d_H$ will decrease as well. Also, a large choice of $\alpha$ can reduce the uncertainty of the estimation.  



Besides for parameter domain matching, the proposed loss function HAND can also be applied to landmark matching. 
Some previous works \cite{lam2014landmark, choi2015flash} constraint the landmark correspondence in their models. 
On the other hand, some landmark matching tasks are achieved by 
minimizing a landmark matching energy $L$ \cite{lyu2024bijective, lyu2024spherical, zhang2022unifying} in the form of:
\begin{equation}
    L \left( f, \{ p_k \}_{k=1}^M, \{ q_k \}_{k=1}^M \right)
    = 
    \sum_{k=1}^M d \left( f(p_k), q_k \right),
\end{equation}
where $d$ is a distance metric. This approach matches only point-wise correspondence for a equal number of points in landmark and target. 

Here, with the presence of Hausdorff distance, we can match between landmarks with uneven number of points in landmarks 
$\mathcal{P} = \{ p_i \}_{i=1}^p$ 
and target positions
$\mathcal{Q} = \{ q_j \}_{j=1}^q$
by minimizing
\begin{equation}
\begin{split}
    L \left( f, \mathcal{P}, \mathcal{Q} \right)
    = 
    d_H \left( f( \mathcal{P} ), \mathcal{Q} \right), 
\end{split}
\end{equation}
or by minimizing our proposed HAND as
\begin{equation}
\begin{split}
    L \left( f, \mathcal{P}, \mathcal{Q} \right)
    = 
    H_\alpha \left( f( \mathcal{P} ), \mathcal{Q} \right). 
\end{split}
\end{equation}
More generally, we can extend the approach for region-wise correspondence. Given $M$ pairs of landmark regions 
$\{ P_k \}_{k = 1}^M$ 
and their corresponding target regions 
$\{ Q_k \}_{k=1}^M$, 
\begin{equation}
\label{eqn: landmark hand}
    L \left( f, \{ \mathcal{P}_k \}_{k=1}^M, \{ \mathcal{Q}_k \}_{k=1}^M \right)
    = 
    \sum_{k=1}^M H_\alpha \left( f(\mathcal{P}_k), \mathcal{Q}_k \right), 
\end{equation}
where $\mathcal{P}_k \subseteq P_k$ and $\mathcal{Q}_k \subseteq Q_k$ are finite points sampled in the regions, possibly in different numbers. This approach for landmark matching generalized the traditional methods, allowing matching between regions or unbalanced numbers of points between landmarks and targets, provides much more flexibilities than before. 

\subsection{Localized Energy for Geometry}
The conformal mapping, in Definition \ref{def: conformal map}, of a infinitesimal neighbourhood of a point $x_i$ is scaled by its conformal factor. We propose an energy formulation, aiming to mimic this idea by optimizing a mapping $f$ and a scalar function $\lambda$ through minimizing a $\lambda$-scaled local distance distortion energy introduced below. 

Define a quantity measuring distortion of two points $x_i, x_j \in \mathcal{X}$ by:
\begin{equation}
    D_{ij} = \Bigl( e^{- \| x_i - x_j \|^2 / \sigma^2} - e^{- \| f(x_i) - f(x_j) \|^2 / \sigma^2 \lambda_{ij}^2 } \Bigr) ^2, 
\end{equation}
where $\sigma, \lambda_{ij} = \lambda(x_i, x_j) > 0$. 
Localized Energy for Geometry (LEG) for mapping $f$ on point cloud $\mathcal{X}$ defined as the average of all $D_{ij}$, 
\begin{equation}
\label{def: LDDE: full}
\begin{split}
    D_\sigma(f, \lambda, \mathcal{X}) 
    &= 
    \frac{1}{N^2} \sum_{x_i, x_j \in \mathcal{X}} \Bigl( e^{- \| x_i - x_j \|^2 / \sigma^2} - e^{- \| f(x_i) - f(x_j) \|^2 / \sigma^2 \lambda_{ij}^2 } \Bigr) ^2
\end{split}
\end{equation}

This geometry distortion energy puts a higher weight on the points that are close to each other. 
Specifically, note that $e^{- \| x_i - x_j \|^2 / \sigma^2}$ approaches $1$ as $\| x_i - x_j \| \to 0$, and approaches $0$ as $\| x_i - x_j \| \to \infty$. Therefore, for $x_i, x_j$ close to each other, $D_{ij}$ is small only if $\| f(x_i) - f(x_j) \|$ is sufficiently close to $\| x_i - x_j \|$ up to a factor of $\lambda_{ij}$, or $\sigma$ is chosen to be much smaller than $\| x_i - x_j \|$ and $\| f(x_i) - f(x_j) \|^2 / \lambda_{ij}^2$. On the other hand, if $x_i, x_j$ are far away from each other, $D_{ij}$ will be small as long as $f(x_i), f(x_j)$ are also distant from each other, and the difference between $\| x_i - x_j \|$ and $\| f(x_i) - f(x_j) \|^2 / \lambda_{ij}^2$ will play a less important role in the overall energy. 


Suppose a triangular mesh on the point cloud describes a surface structure, in the sense that each edge appears in at most two triangular faces, denoted as $(\mathcal{X}, \mathcal{T})$, where $\mathcal{T}$ is the set of the triangles. Each element $T \in \mathcal{T}$ is an index set $\{ i, j, k \}$ containing the indices of vertices. This structure need not to be the underlying ground truth triangular mesh, any arbitrary surface mesh satisfies the theorem below, provided a fidelity assumption and a smoothness assumption on $\lambda$. 
Let $\mathcal{Y} = f(\mathcal{X}) = \{ y_i \}_{i=1}^N$, write $\lambda^{true}_{ij} = \| y_i - y_j \| / \| x_i - x_j \|$.

\begin{figure}[!h]
    \centering
    \includegraphics[width=.45\textwidth]{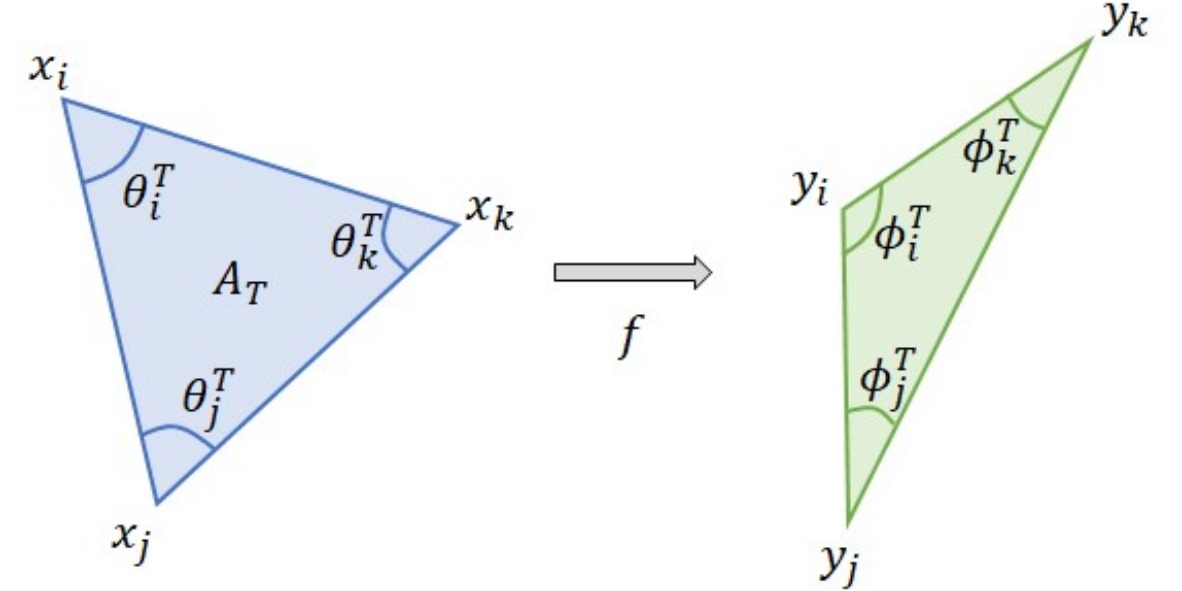}
    \caption{A figure illustrating the deformation of a triangle $T$ under mapping $f$.}
    \label{fig: LDDE illustration}
\end{figure}

\begin{theorem}
\label{thm: LDDE > angle distortion}
    Under the above notations, 
    write the longest edge length in $(\mathcal{X}, \mathcal{T})$ to be $R$. 
    Suppose there exist constants $0 < r_s, r_f < 1$ such that
    for any edge $i, j$,
    $|\lambda^{true}_{ij} / \lambda_{ij} - 1| < r_f$, and
    for any triangle $T = \{ i, j, k \}\in \mathcal{T}$, $|\lambda_{ik} / \lambda_{ij} - 1| < r_s$, 
    then
    \begin{equation}
    \label{ineq: LDDE > angle distortion}
    \begin{split}
        &\quad 
        \left(
            \sigma^2
            e^{
                (1 + r_f)^2 
                R^2
                / 
                \sigma^2
            }
        \right)^2
        D_\sigma
        + 
        \frac{162 |\mathcal{T}|} {N^2} 
        R^4 ( r_s + r_f )
        \\ &\ge
        \frac{4}{\pi^2}
        \frac{1}{N^2} 
        \sum_{\{ i, j, k \} = T \in \mathcal{T}}
        A_T^2
        \left[
            \left( 
                \theta^T_i - \phi^T_i 
            \right)^2
            + 
            \left( 
                \theta^T_j - \phi^T_j 
            \right)^2
            + 
            \left( 
                \theta^T_k - \phi^T_k 
            \right)^2
        \right] 
    \end{split}
    \end{equation}
    where $A_T$ is the area of triangle formed by $x_i, x_j, x_k$, $\theta^T_i = \angle x_k x_i x_j, \phi^T_i = \angle y_k y_i y_j$. 
\end{theorem}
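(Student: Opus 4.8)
The plan is to reduce the inequality, in two preliminary moves, to an elementary per-triangle rigidity estimate, and then to prove that estimate by differentiating the law of cosines.

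\emph{Moves 1 and 2 (from $D_\sigma$ to squared edge-length distortions on triangles).} Write $D_\sigma=\frac{1}{N^2}\sum_{i,j}D_{ij}$. For each pair I would apply the mean value theorem to $t\mapsto e^{-t}$ on the interval with endpoints $\|x_i-x_j\|^2/\sigma^2$ and $\|f(x_i)-f(x_j)\|^2/(\sigma^2\lambda_{ij}^2)$; using the edge bound $R$, $\lambda^{true}_{ij}<\Lambda^t$, $\lambda_{ij}>\lambda_0$, $r_\lambda=\Lambda^t/\lambda_0\ge 1$, and $\eta=1/\sigma$, both endpoints lie in $[0,\eta^2 r_\lambda^2 R^2]$, so the derivative factor is at least $e^{-\eta^2 r_\lambda^2 R^2}$ and
\[
\left(\frac{e^{\eta^2 r_\lambda^2 R^2}}{\eta^2}\right)^{2} D_{ij}\;\ge\;\left(\|x_i-x_j\|^2-\frac{\|f(x_i)-f(x_j)\|^2}{\lambda_{ij}^2}\right)^{2}.
\]
Since every $D_{ij}\ge 0$ and each edge of $(\mathcal X,\mathcal T)$ lies in at most two triangles, I would drop all pairs that are not edges of $\mathcal T$ and regroup by triangles; the factor $2$ lost from ``each edge in $\le 2$ triangles'' is offset by the factor $2$ between ordered and unordered pairs, giving
\[
\left(\frac{e^{\eta^2 r_\lambda^2 R^2}}{\eta^2}\right)^{2} D_\sigma\;\ge\;\frac{1}{N^2}\sum_{\{i,j,k\}=T\in\mathcal T}\Big[(a^2-\tilde a^2)^2+(b^2-\tilde b^2)^2+(c^2-\tilde c^2)^2\Big],
\]
where $a=\|x_j-x_k\|,b=\|x_k-x_i\|,c=\|x_i-x_j\|$ and $\tilde a=\|y_j-y_k\|/\lambda_{jk}$, $\tilde b=\|y_k-y_i\|/\lambda_{ki}$, $\tilde c=\|y_i-y_j\|/\lambda_{ij}$.

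\emph{Move 3 (the per-triangle rigidity estimate).} It then suffices to prove, for each $T$,
\[
(a^2-\tilde a^2)^2+(b^2-\tilde b^2)^2+(c^2-\tilde c^2)^2+84\,r_\lambda^4 R^4\;\ge\;\frac{4}{\pi^2}A_T^2\big[(\theta_i-\phi_i)^2+(\theta_j-\phi_j)^2+(\theta_k-\phi_k)^2\big].
\]
Writing the vertex angle as a function $\Theta$ of the three squared edge lengths and differentiating the law of cosines, one gets $\partial\Theta_i/\partial(a^2)=1/(4A)$, $\partial\Theta_i/\partial(b^2)=-a\cos\theta_k/(4Ab)$, and the symmetric formula in the third slot. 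I would introduce the auxiliary triangle with side lengths $\tilde a,\tilde b,\tilde c$ (whenever it is non-degenerate), with vertex angles $\psi_v$, split $(\theta_v-\phi_v)^2\le 2(\theta_v-\psi_v)^2+2(\psi_v-\phi_v)^2$, and bound the $\theta$-versus-$\psi$ part via the mean value theorem in terms of $(a^2-\tilde a^2,b^2-\tilde b^2,c^2-\tilde c^2)$ — this is the genuine contribution of $D_\sigma$. Every remaining piece — the $\psi$-versus-$\phi$ discrepancy (present because $\lambda_{jk},\lambda_{ki},\lambda_{ij}$ need not agree within $T$, so the auxiliary triangle is not similar to $f(T)$) and every degenerate or near-degenerate configuration, where the angle derivatives blow up — is controlled crudely by the a priori bounds $A_T<\tfrac{\sqrt3}{4}R^2$ and $(\theta_i-\phi_i)^2+(\theta_j-\phi_j)^2+(\theta_k-\phi_k)^2<2\pi^2$ (the angles lie in $(0,\pi)$ and, in each triangle, sum to $\pi$, so the defects sum to zero), together with $r_\lambda\ge 1$, and absorbed into the slack $84\,r_\lambda^4 R^4$. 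Summing over $T$, dividing by $N^2$, and combining with Move~2 finishes the argument.

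\emph{Main obstacle.} The difficulty is concentrated entirely in Move~3: one must make the edge-length-to-angle perturbation quantitative via the law of cosines while keeping the weight $A_T^2$ attached to the \emph{original} triangle, even though the relevant angle derivatives are governed by the possibly far smaller area of an intermediate triangle along the mean-value segment — which forces a case split between non-degenerate and degenerate triangles. A secondary difficulty is that, since $\lambda_{ij}$ is an arbitrary positive function rather than a per-triangle constant, the rescaled triangle $(\tilde a,\tilde b,\tilde c)$ is not similar to $f(T)$, so the angle distortion is not literally controlled by $D_\sigma$; this gap is precisely what the deliberately generous slack term $84\,r_\lambda^4 R^4$ is designed to absorb.
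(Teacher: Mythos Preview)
Your Moves~1 and~2 coincide with the paper's opening steps (MVT on $t\mapsto e^{-t}$, regroup by triangles). The difference is entirely in Move~3.

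First, a remark you almost make yourself: your ``crude'' a~priori bounds already prove the per-triangle inequality outright. From $A_T\le\frac{\sqrt{3}}{4}R^2$ and $\sum_v(\theta_v-\phi_v)^2<2\pi^2$ (which is correct: the three defects sum to zero and each lies in $(-\pi,\pi)$, so the supremum is $2\pi^2$), the right side is below $\frac{4}{\pi^2}\cdot\frac{3}{16}R^4\cdot 2\pi^2=\frac{3}{2}R^4$, which is swallowed by $84\,r_\lambda^4R^4$ since $r_\lambda\ge 1$. So the whole MVT-on-$\Theta$ programme, the auxiliary triangle $(\tilde a,\tilde b,\tilde c)$, the split $(\theta_v-\phi_v)^2\le 2(\theta_v-\psi_v)^2+2(\psi_v-\phi_v)^2$, and the ``main obstacle'' you flag (the intermediate area in the mean-value gradient can collapse even when both endpoint triangles are nondegenerate) are all unnecessary for the theorem \emph{as stated}. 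Your route is therefore correct, but for a vacuous reason you did not quite articulate.

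The paper does something genuinely different in Move~3, and it proves more. Writing $p=(a^2-\tilde a^2)$, $q=(b^2-\tilde b^2)$, $r=(c^2-\tilde c^2)$, it uses the polarisation identity
\[
p^2+q^2+r^2\;\ge\;\tfrac14\bigl[(p+q-r)^2+(p-q+r)^2+(-p+q+r)^2\bigr],
\]
and then applies the law of cosines to each \emph{signed} combination, e.g.\ $p+q-r=2bc\cos\theta_j - 2\tilde b\tilde c\cos\phi_j - [(\tilde a')^2-\tilde a^2]$, where $\tilde a'$ is the third side of the auxiliary triangle on $\tilde b,\tilde c$ with included angle $\phi_j$. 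The key device is an MVT on $\cos$ rather than on $\Theta$: for $\theta,\phi\in(0,\pi)$ one has $(\cos\theta-\cos\phi)^2\ge\frac{1}{\pi^2}\sin^2\theta\,(\theta-\phi)^2$, so that $(2bc)^2(\cos\theta_j-\cos\phi_j)^2\ge\frac{16}{\pi^2}A_T^2(\theta_j-\phi_j)^2$ with $A_T=\frac12 bc\sin\theta_j$ appearing directly --- no intermediate triangle, no collapsing area. The slack $84\,r_\lambda^4R^4$ in the paper is then used only to peel off the cross terms via $(A+B+C)^2+2(|AB|+|BC|+|CA|)\ge A^2+B^2+C^2$; the $B^2$ and $C^2$ pieces survive as nonnegative ``regularity terms'' on the right, which the theorem statement drops. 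So the paper's argument is tighter and sidesteps precisely the obstacle you identify; your approach works because the stated inequality has enough slack to be trivial, but the paper's route shows why $D_\sigma$ actually carries the angle information.
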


\begin{proof}
    Write $d^x_{ij} = \| x_i - x_j \|, d^y_{ij} = \| y_i - y_j \|$,
    $\eta = 1/\sigma$. 
    By mean value theorem, 
    \begin{equation}
    \begin{split}
        D_{ij} &= \left( e^{-(d^x_{ij})^2 / \sigma^2} - e^{-(d^y_{ij})^2 / \sigma^2 \lambda_{ij}^2} \right)^2 
        \\ &\ge
        \left( 
            (d^x_{ij})^2 - (d^y_{ij})^2 / \lambda_{ij}^2
        \right)^2
        \left(
            \eta^2 
                e^{
                    -\eta^2 (1 + r_f)^2 
                    (d^x_{ij})^2
                }
        \right)^2
        .
    \end{split}
    \end{equation}
    
    Pick $T = \{ i, j, k \} \in \mathcal{T}$. 
    For simpler notations, write the edge representations $e^T_1 = [ij]$, $e^T_2 = [jk]$, $e^T_3 = [ki]$, with a period of $3$, i.e. $e^T_n = e^T_{n + 3}$. 
    Since 
    $a^2 + b^2 + c^2 = 
    \frac{1}{4} (a + b + c)^2 + 
    \frac{1}{4} (a + b - c)^2 + 
    \frac{1}{4} (a - b + c)^2 + 
    \frac{1}{4} (a - b - c)^2$
    , 
    \begin{equation}
    \label{ineq: Dij + Djk + Dki >= three terms}
    \begin{split}
        &\quad D_{ij} + D_{jk} + D_{ki}
        = D_{e^T_1} + D_{e^T_2} + D_{e^T_3}
        \\ &\ge
        \left( 
            \eta^2 
            e^{
                -\eta^2 (1 + r_f)^2 
                R^2
            }
        \right)^2
        \sum^{3}_{n=1}
            \left( 
                (d^x_{e^T_n})^2 - (d^y_{e^T_n})^2 / \lambda_{e^T_n}^2
            \right)^2
        \\
        &= 
        \frac{1}{4}
        \mathcal{E}^{-2}
        \sum^{3}_{n=1}
        (\mathcal{D}^T_n)^2
    \end{split}
    \end{equation}
    where $\mathcal{E} = e^{\eta^2 (1 + r_f)^2 R^2} / \eta^2$, and 
    \begin{equation}
    \begin{split}
        \mathcal{D}^T_n 
        =
        \sum^{2}_{m=0}
            (-1)^m
            \left(
            \left(
            d^x_{ e^T_{n+m} }
            \right)
            ^2 - 
            \left(
                d^y_{ e^T_{n+m} }
                / 
                \lambda_{ e^T_{n+m} }
            \right) ^ 2
            \right). 
    \end{split}
    \end{equation}
    
    Below, consider only the $\mathcal{D}_2^T$ in the inequality (\ref{ineq: Dij + Djk + Dki >= three terms}), and the others follow directly by symmetry.

    \begin{equation}
    \label{eqn: cosine distortion + e1 + e2 + e3}
    \begin{split}
    \mathcal{D}_2^T
    &=
    \left( 
        (d^x_{ij})^2 - (d^y_{ij})^2 / \lambda_{ij}^2
    \right)
    + 
    \left( 
        (d^x_{jk})^2 - (d^y_{jk})^2 / \lambda_{jk}^2
    \right)
    - 
    \left( 
        (d^x_{ki})^2 - (d^y_{ki})^2 / \lambda_{ki}^2
    \right)
    \\ &=
    2 d^x_{ij} d^x_{jk} (\cos \theta_j - \cos \phi_j)
    +
    [
        (d^y_{ij})^2 + (d^y_{jk})^2 - 2 d^y_{ij} d^y_{jk} \cos \phi_j 
    ] / \lambda_{ki}^2
    \\ &\quad
    - 
    [
        (d^y_{ij})^2 / \lambda_{ij}^2 
        + 
        (d^y_{jk})^2 / \lambda_{jk}^2 
        - 
        2 d^x_{ij} d^x_{jk} \cos \phi_j 
    ]
    \\ &=
    [
        2 d^x_{ij} d^x_{jk} (\cos \theta_j - \cos \phi_j)
    ]
    + 
    [
        (d^y_{ij})^2 / \lambda_{ki}^2
        - 
        (d^y_{ij})^2 / \lambda_{ij}^2
    ]
    \\ &\quad
    +
    [
        (d^y_{jk})^2 / \lambda_{ki}^2 
        - 
        (d^y_{jk})^2 / \lambda_{jk}^2
    ]
    +
    [
        2 d^x_{ij} d^x_{jk} \cos \phi_j 
        - 
        2 d^y_{ij} d^y_{jk} \cos \phi_j / \lambda_{ki}^2
    ]
    \\ &= 
    \mathcal{I}_0 + \mathcal{I}_1 + \mathcal{I}_2 + \mathcal{I}_3
    \end{split}
    \end{equation}
    where $\mathcal{I}_0, \mathcal{I}_1, \mathcal{I}_2, \mathcal{I}_3$ are the four terms in squared brackets respectively. 


    First, note that
    \begin{equation}
    \begin{split}
        |\mathcal{I}_0| = 
        \left|
        2 d^x_{ij} d^x_{jk} 
        \left( 
            \cos \theta_j - \cos \phi_j 
        \right)
        \right| 
        \le
        4 R^2. 
    \end{split}
    \end{equation}

    The terms $\mathcal{I}_1$ and $\mathcal{I}_2$ are symmetric and only the former is analyzed. Since $1 - ab = (1 - a) + (a - 1)(1 - b) + (1 - b)$, 
    \begin{equation}
    \begin{split}
        \mathcal{I}_1 
        &= 
        (d^x_{ij})^2 
        \left(
        \frac{\lambda^{true}_{ij}}{\lambda_{ij}}
        \right)^2
        \left(
        \frac{\lambda_{ij}^2}{\lambda_{ik}^2} - 1
        \right)
        \\ &=
        (d^x_{ij})^2 
        \left(
        \frac{(\lambda^{true}_{ij})^2}
        {\lambda_{ij}^2}
        - 1
        \right)
        \left(
        \frac{\lambda_{ij}^2}
        {\lambda_{ik}^2} 
        - 1
        \right)
        + 
        (d^x_{ij})^2 
        \left(
        \frac{\lambda_{ij}^2}
        {\lambda_{ik}^2} 
        - 1
        \right). 
    \end{split}
    \end{equation}
    Then
    \begin{equation}
    \begin{split}
        |\mathcal{I}_1| 
        &\le 
        R^2 (2 r_f + r_f^2) (2 r_s + r_s^2) + R^2 (2 r_s + r_s^2)
        \le 
        6 R^2 (r_f + r_s)
        .
    \end{split}
    \end{equation}
    The same estimation bound holds for $|\mathcal{I}_2|$. 

    For the last term $\mathcal{I}_3$, 
    \begin{equation}
    \begin{split}
        \mathcal{I}_3 
        &= 
        2 d^x_{ij} d^x_{jk} \cos \phi_j 
        \left(
            1 - 
            \frac{\lambda^{true}_{ij}}{\lambda_{ij}}
            \frac{\lambda^{true}_{jk}}{\lambda_{jk}}
            \frac{\lambda_{ij}}{\lambda_{ki}}
            \frac{\lambda_{jk}}{\lambda_{ki}}
        \right)
        \\ &= 
        2 d^x_{ij} d^x_{jk} \cos \phi_j 
        \Bigg[
        \left(
            1 - 
            \frac{\lambda^{true}_{ij}}{\lambda_{ij}}
            \frac{\lambda^{true}_{jk}}{\lambda_{jk}}
        \right)
        +
        \left(
            \frac{\lambda^{true}_{ij}}{\lambda_{ij}}
            \frac{\lambda^{true}_{jk}}{\lambda_{jk}}
            - 1
        \right)
        \left(
            1 - 
            \frac{\lambda_{ij}}{\lambda_{ki}}
            \frac{\lambda_{jk}}{\lambda_{ki}}
        \right)
        \\ &\quad
        + 
        \left(
            1 - 
            \frac{\lambda_{ij}}{\lambda_{ki}}
            \frac{\lambda_{jk}}{\lambda_{ki}}
        \right)
        \Bigg],
    \end{split}
    \end{equation}
    and 
    \begin{equation}
    \begin{split}
        |\mathcal{I}_3| 
        &\le 
        2 R^2 
        [
            (2 r_f + r_f^2)
            +
            (2 r_f + r_f^2)
            (2 r_s + r_s^2)
            +
            (2 r_s + r_s^2)
        ]
        \le
        15 R^2
        (r_f + r_s)
        .
    \end{split}
    \end{equation}

    By some computations using mean value theorem and compound angle formula, we have
    \begin{equation}
    \begin{split}
        (\cos \theta - \cos \phi)^2 \ge \frac{1}{\pi^2} \sin^2 \theta (\theta - \phi)^2. 
    \end{split}
    \end{equation}

    

    Also, note $(a + b)^2 + \mathcal{C} \ge a^2 + b^2$ if $\mathcal{C} \ge 2|ab|$. 
    By taking 
    \begin{equation}
    \begin{split}
        \mathcal{C} 
        &= 216 R^4 (r_f + r_s),
    \end{split}
    \end{equation}
    we have 
    \begin{equation}
    \begin{split}
        (\mathcal{D}_2^T)^2 + \mathcal{C}
        &=
        \left[
            \mathcal{I}_0 
            + 
            ( \mathcal{I}_1 + \mathcal{I}_2 + \mathcal{I}_3 )
        \right]^2
        + C
        \\ &\ge
        \left(
            2 d^x_{ij} d^x_{jk} 
        \right)^2
        \left( 
            \cos \theta_j - \cos \phi_j 
        \right)^2
        + 
        ( 
            \mathcal{I}_1 + \mathcal{I}_2 + \mathcal{I}_3 
        )^2
        \\ &\ge
        \frac{16}{\pi^2}
        A_{T}^2
        \left( 
            \theta_j - \phi_j 
        \right)^2
        + 
        ( 
            \mathcal{I}_1 + \mathcal{I}_2 + \mathcal{I}_3 
        )^2
    \end{split}
    \end{equation}
    where $A_{T}$ is the area of triangle formed by $x_i, x_j, x_k$. By symmetry, $\mathcal{D}_1^T$ and $\mathcal{D}_3^T$ are estimating the distortions of angle at $i$ and $k$ with respectively. 
    
    So the inequality in (\ref{ineq: Dij + Djk + Dki >= three terms}) can be rewritten as:
    \begin{equation}
    \begin{split}
        \mathcal{E} ^ 2
        \sum_{n = 1}^3 D_{e^T_n}
        + \frac{3}{4} \mathcal{C}
        \ge 
        \frac{4}{\pi^2}
        A_{T}^2
        \left[
        \left( 
            \theta_i - \phi_i 
        \right)^2
        + 
        \left( 
            \theta_j - \phi_j 
        \right)^2
        + 
        \left( 
            \theta_k - \phi_k 
        \right)^2
        \right]. 
    \end{split}
    \end{equation}
    
    Note that 
    \begin{equation}
    \begin{split}
        \mathcal{E} ^ 2
        D_\sigma
        + 
        \frac{
            3 | \mathcal{T} |
        }{
            4 N^2
        } 
        \mathcal{C}
        &= 
        \frac{ \mathcal{E}^2 }{N^2} \sum_{i, j = 1}^{N} D_{ij}
        + 
        \frac{
            3 | \mathcal{T} |
        }{
            4 N^2
        } 
        \mathcal{C}
        \ge
        \frac{1}{N^2} \sum_{T \in \mathcal{T}} 
        \left[
            \mathcal{E}^2 
            \sum_{n = 1}^{3} 
            D_{e^T_n}
            + 
            \frac{3}{4} \mathcal{C}
        \right]
        . 
    \end{split}
    \end{equation}
    Altogether, 
    \begin{equation}
    \label{ineq: sth*D + O(r^4 R^4) > angle distortion}
    \begin{split}
        &\quad 
        \left(
            \sigma^2
            e^{
                (1 + r_f)^2 
                R^2 / \sigma^2
            }
        \right)^2
        D_\sigma
        + 
        \frac{162 |\mathcal{T}|} {N^2} 
        R^4 ( r_f + r_s )
        \\ &
        \ge
        \frac{4}{\pi^2}
        \frac{1}{N^2} 
        \sum_{\{ i, j, k \} = T \in \mathcal{T}}
        A_T^2
        \left[
            \left( 
                \theta^T_i - \phi^T_i 
            \right)^2
            + 
            \left( 
                \theta^T_j - \phi^T_j 
            \right)^2
            + 
            \left( 
                \theta^T_k - \phi^T_k 
            \right)^2
        \right] 
        .
    \end{split}
    \end{equation}
\end{proof}

The coefficient $\sigma^2 e^{ (1 + r_f)^2 R^2 / \sigma^2}$ attains minimum when $\sigma = ( 1 + r_f ) R$ with minimum value $(1 + r_f)^2 R^2 e$. Together with the second terms $ 162|\mathcal{T}| R^4 ( r_s + r_f ) / N^2 $, the theorem stated above suggests that a proper choice of $\sigma$ and small values in $r_f$ and $r_s$ provide a finer upper bound for the angle distortions between the original point cloud and its mapping. 
The relationship between the assumptions and angle distortion can be understood intuitively. 
By smoothness assumption, $\lambda_{ij} \approx \lambda_{jk} \approx \lambda_{ik}$, and by fidelity assumption, $\lambda_{ij} \approx \lambda_{ij}^{true}$, 
$\lambda_{jk} \approx \lambda_{jk}^{true}$, 
$\lambda_{ik} \approx \lambda_{ik}^{true}$, we have $\lambda_{ij}^{true} \approx \lambda_{jk}^{true} \approx \lambda_{ik}^{true}$, which means the two triangles are approximately similar, and hence low in angle distortion. Also, the difference in angle is accumulated from fidelity and smoothness error. 

Assuming the optimal choice $\sigma = (1 + r_f) R$, and denoting $\Theta_T = (\theta^T_i - \phi^T_i)^2 + (\theta^T_j - \phi^T_j)^2 + (\theta^T_k - \phi^T_k)^2$, dividing both sides by $R^4$, the inequality (\ref{ineq: LDDE > angle distortion}) can be rewritten as:

\begin{equation}
\begin{split}
    (1 + r_f)^4 e^2
    D_\sigma
    + 
    \frac{162 |\mathcal{T}|} {N^2} 
    ( r_s + r_f )
    &\ge
    \frac{4}{N^2} 
    \sum_{T \in \mathcal{T}}
    \left(
        \frac{A_T}{\pi R^2}
    \right)^2
    \Theta_T
    \\ &=
    \frac{4}{N^2} 
    \sum_{T \in \mathcal{T}}
    \left(
        \frac{A_T}{\pi R_T^2}
    \right)^2
    \left(
        \frac{R_T}{R}
    \right)^4
    \Theta_T,
\end{split}
\end{equation}
where $R_T$ denotes the longest edge length of triangle $T$. The factor $A_T / \pi R_T^2$ serves as a triangle-to-circle area ratio on triangle $T$. This value is large if the angles of $T$ are uniform, i.e. close to an equilateral triangle, and it is small if the angles are irregular. Meanwhile, $R_T / R$ is a local-to-global edge length ratio between the longest edge length on $T$ to the longest edge length on the entire mesh $(\mathcal{X}, \mathcal{T})$. So Theorem \ref{thm: LDDE > angle distortion} puts a higher weight on angle distortion in triangle with higher triangle-to-circle area ratio and higher local-to-global edge length ratio. 
Intuitively, although Theorem \ref{thm: LDDE > angle distortion} holds for all possible meshes defined on $\mathcal{X}$, the minimization of angle distortions on meshes with more uniform angle and length distribution is preferred, and large angle distortion on nonsensical meshes is allowed. In Figure \ref{fig: leg factors}, the two left plots show how variance in angle affects $A_T / \pi R_T^2$, given the same $R_T$, the two right plots show how different mesh structures defined on the same point cloud affect $R$ and local-to-global edge length ratio. 

\begin{figure}[!h]
    \centering
    \includegraphics[width=.9\textwidth]{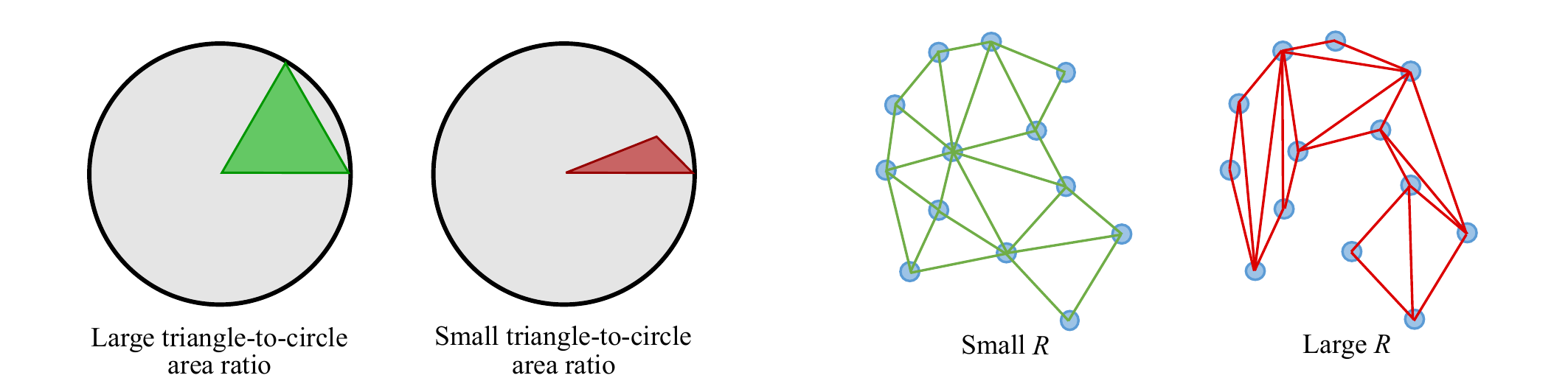}
    \caption{An illustration explaining how the mesh structure affects the factors.}
    \label{fig: leg factors}
\end{figure}



Moreover, 
if $\lambda$ has $N (N-1)$ degrees of freedom, then $D_\sigma = 0$ when $\lambda_{ij} = \| f(x_i) - f(x_j) \| / \| x_i - x_j \|$ for any arbitrary $f, i \ne j$, which is an undesirable local minimum. 
This phenomenon can be explained by Theorem \ref{thm: LDDE > angle distortion}, that the fidelity quantity $r_f$ is $0$, while the smoothness quantity $r_s$ can be very large, resulting in large angle distortion. 
For reducing the degree of freedom, and also for improving its smoothness, a particular formulation for $\lambda$ has to be chosen, which will be discussed below, in the Section \ref{sec: optim algo}.

\subsection{Optimization Algorithm}
\label{sec: optim algo}
With the discussed loss functions, we model the surface parametrization problem as a optimization problem, with three terms: LEG, HAND for parameter domain matching, and HAND for landmark matching, 
\begin{equation}
\label{optim problem}
    ( f^*, \lambda^* )
    = \argmin_{f, \lambda} 
    \beta_1 D_\sigma (f, \lambda, \mathcal{X}) + 
    \beta_2 H_\alpha ( f(\mathcal{X}), \mathcal{W} ) + 
    \beta_3 \sum_{k=1}^M H_\alpha \left( f(\mathcal{P}_k), \mathcal{Q}_k \right)
\end{equation}
where $\beta_1, \beta_2, \beta_3 > 0$, 
$\mathcal{X} \subseteq \mathcal{M}$ is the point cloud surface, $\mathcal{W} \subseteq \Omega$ is a set of finite points in the parameter domain, 
$\{ \mathcal{P}_k \}_{k = 1}^M$ and $\{ \mathcal{Q}_k \}_{k = 1}^M$ are landmark and target finite point sets. 
Also, in order to better regularize the geometry of the neighbourhood of the landmarks, the landmarks are included in the calculation of LEG. 
The symmetric bivariate function $\lambda : \mathcal{X} \times \mathcal{X} \to \mathbb{R}$ is modelled by a univariate function $\lambda_{inv}$ by $1 / \lambda (x_i, x_j) = \lambda_{inv} (x_i) + \lambda_{inv} (x_j)$, which is named inverse $\lambda$ function. 
The optimization problem \ref{optim problem} is finalized as 
\begin{equation}
\label{optim problem overall}
\begin{split}
    &\quad ( f^*, \lambda_{inv}^* )
    = \argmin_{f, \lambda_{inv}} \mathcal{L} (f, \lambda_{inv}, \mathcal{X}, \mathcal{W}, \{ \mathcal{P}_k \}_{k=1}^M, \{ \mathcal{Q}_k \}_{k=1}^M) 
    \\ &=
    \argmin_{f, \lambda_{inv}} \beta_1 D_\sigma \left( f, \lambda_{inv}, \mathcal{X}
    \cup \mathcal{P}
    \right) + \beta_2 H_\alpha ( f(\mathcal{X}), \mathcal{W} ) + \beta_3 \sum_{k=1}^M H_\alpha \left( f(\mathcal{P}_k), \mathcal{Q}_k \right)
    ,
\end{split}
\end{equation}
where $\mathcal{P} = \bigcup_{k=1}^M \mathcal{P}_k$. 

For the sake of continuity, we represent the mapping function and inverse $\lambda$ function using neural networks, giving the names parametrization net and inverse $\lambda$ net. Under the setting of $1 / \lambda(x_i, x_j) = \lambda_{inv}(x_i) + \lambda_{inv}(x_j)$, no matter the number of parameters in $\lambda_{inv}$-net, the degree of freedom of $\lambda_{ij}$ is at most $N$, resolving the potential problem stated in the last paragraph in previous section caused by the exceeding degree of freedom. 

Stochastic gradient descent (SGD) is also applied in the optimization. By the traditional idea of SGD, in $i$-epoch and $j$-batch, we sample subsets $\mathcal{X}^{i, j} \subseteq \mathcal{X}$ and $\mathcal{W}^{i, j} \subseteq \mathcal{W}$, and compute gradient of  
\begin{equation}
\label{loss in batch}
    \mathcal{L}^{i, j} = 
    \mathcal{L} (f, \lambda_{inv}, 
    \mathcal{X}^{i, j} 
    , \mathcal{W}^{i, j}, \{ \mathcal{P}_k \}_{k=1}^M, \{ \mathcal{Q}_k \}_{k=1}^M), 
\end{equation}
for updates in SGD. 
Here the landmarks and targets are not subsampled, as they have much smaller sizes compared to the point cloud $\mathcal{X}$ in general. 

\begin{figure}[!h]
    \centering
    \includegraphics[width=.7\textwidth]{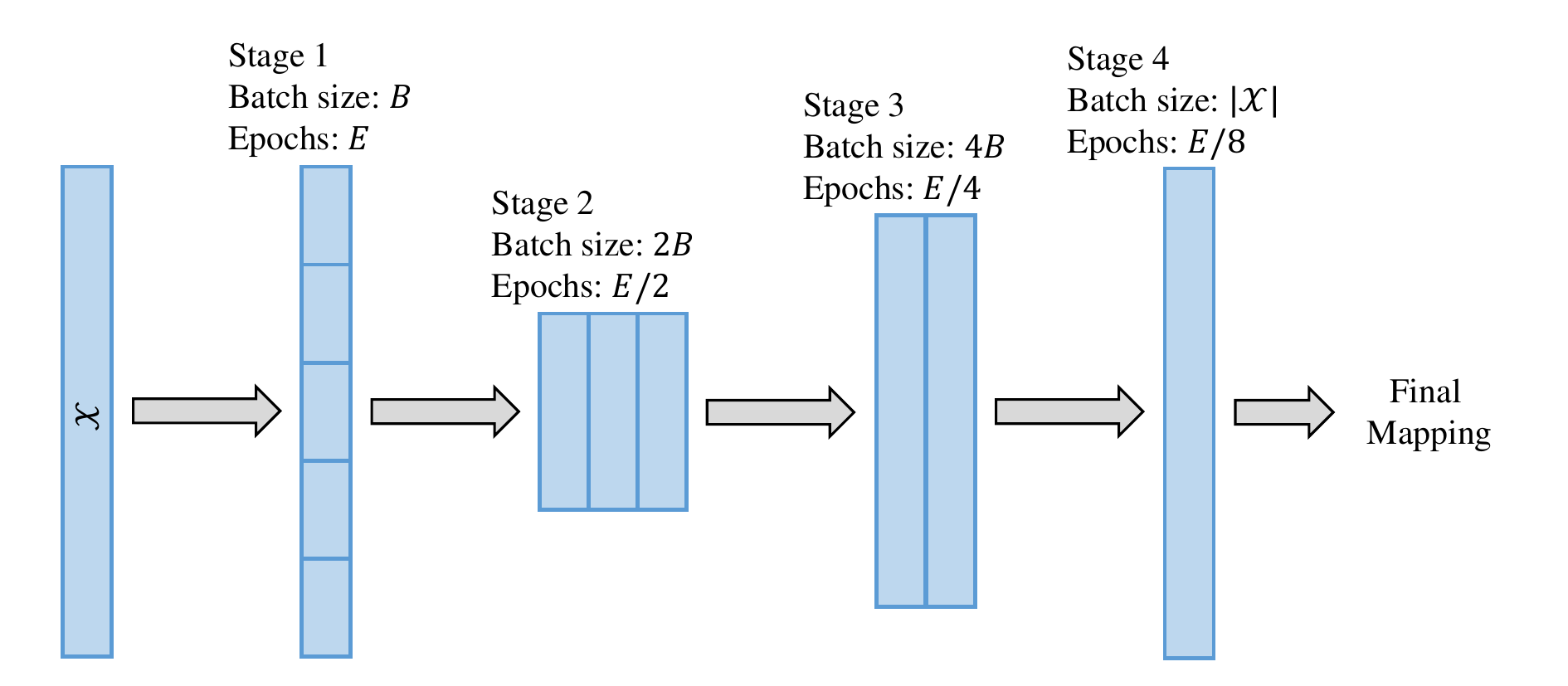}
    \caption{An illustration to the dynamics of the suggested optimization algorithm.}
    \label{fig: optim algo}
\end{figure}

Apart from the traditional meaning of approximation to the full gradient, Theorems \ref{thm: error bound for approximation of Hausdorff Distance} and \ref{thm: LDDE > angle distortion} proved previously provide a geometrical view for applying SGD. 
The point set $\mathcal{X}^{i, j} \cup \mathcal{P}$ can also be understood as a point cloud of the underlying surface $\mathcal{M}$, coarser than $\mathcal{X}$. The term $D_\sigma (f, \lambda_{inv}, \mathcal{X}^{i, j} \cup \mathcal{P})$ is not only an approximation to $D_\sigma (f, \lambda_{inv}, \mathcal{X} \cup \mathcal{P})$, but also a measurement of geometric distortion of the coarsened point cloud under mapping $f$. Theorem \ref{thm: LDDE > angle distortion} suggests that minimization on $D_\sigma (f, \lambda_{inv}, \mathcal{X}^{i, j} \cup \mathcal{P})$ reduces the angle distortion of an underlying surface mesh on $\mathcal{X}^{i, j} \cup \mathcal{P}$. 
On the other hand, by Theorem \ref{thm: error bound for approximation of Hausdorff Distance}, $H_\alpha ( f(\mathcal{X}^{i, j}), \mathcal{W}^{i, j})$ measures the alignment between the mapping $f(\mathcal{X}^{i, j})$ and point set $\mathcal{W}^{i, j}$. 
By increasing the size of $\mathcal{X}^{i, j}$ and $\mathcal{W}^{i, j}$, values of $h^{f(\mathcal{M})}_\mathcal{Y}$ and $h^{\Omega}_\mathcal{W}$ can be reduced. As a result, Theorem \ref{thm: error bound for approximation of Hausdorff Distance} suggests that the performance of domain matching can be improved. 

By design, the whole optimization process is divided into different stages, where the batch size is doubled after each stage, until the whole point cloud $\mathcal{X}$ is used. 
For a single stage, denote the number of epochs to be $E$, batch size in point cloud to be $B_\mathcal{X}$, batch size in parameter domain to be $B_\mathcal{W}$. 
The number of mini-batches in an epoch is set to be $B = \ceil{|\mathcal{X}| / B_\mathcal{X}}$. The computation of loss functions requires around $B_\mathcal{X}^2 + B_\mathcal{X} \cdot B_\mathcal{W}$ pairwise distances. 
Therefore the overall computation in a stage requires $E \cdot B \cdot (B_\mathcal{X}^2 + B_\mathcal{X} \cdot B_\mathcal{W}) \approx E \cdot |\mathcal{X}| \cdot (B_\mathcal{X} + B_\mathcal{W})$ pairs of distances. To compensate the rise in computations caused by doubling the batch size, the number of epochs is halved after each stage. This setting can be intuitively understood as fitting general pattern using coarser point clouds at first, then fine-tuning the details using the whole point cloud. The idea of this mechanism is illustrated in Figure \ref{fig: optim algo}. 

\begin{algorithm}[htb]
    \caption{Optimization Scheme}
    \label{algo: optim}
    \begin{algorithmic}[1]
        \Require
            Data: a point cloud $\mathcal{X}$, point set from parameter domain $\mathcal{W}$, landmark points $\{ \mathcal{P}_k \}_{k=1}^M$ and targets $\{ \mathcal{Q}_k \}_{k=1}^M$ ;
            Loss function weightings: $\beta_1$, $\beta_2$, $\beta_3$ ;
            Initial training hyperparameters: number of epochs $E$, batch sizes $B_\mathcal{X}$ and $B_\mathcal{W}$; LEG parameter $\sigma$ ;  
            HAND parameters, $\alpha_{initial}$, $\alpha_{final}$ ;
            Parameter bounds: $\sigma_{min}$, $\alpha_{max}$, $E_{min}$. 
        \Ensure
            Trained networks $f$ and $\lambda_{inv}$.
        \State 
            Initialize networks $f$, $\lambda_{inv}$;
        \State 
            $continue$ $\leftarrow$ $True$;
        \While{ $continue$ }
            \If{$B_{\mathcal{X}} = |\mathcal{X}|$} 
                \State
                    $continue$ $\leftarrow$ $False$ ; 
            \EndIf

            \State 
                $\alpha \leftarrow \alpha_{inital}$ ;
            \For{$i = 1, \cdots, E$}
                \For{$j = 1, \cdots, \lceil | \mathcal{X} | / B_{\mathcal{X}} \rceil$}
                    \State
                        Sample 
                        $\mathcal{X}^{i,j} \subseteq \mathcal{X}$, $\mathcal{W}^{i,j} \subseteq \mathcal{W}$, 
                        where
                        $ |\mathcal{X}^{i,j}| = B_\mathcal{X}$, 
                        $ |\mathcal{W}^{i,j}| = B_\mathcal{W}$ ;
                    \State
                        Compute $\mathcal{L}^{i, j}$ in 
                        (\ref{loss in batch}) and its gradient to update $f, \lambda_{inv}$ ;
                \EndFor
                \If{$i < E$} 
                    \State
                        $\alpha \leftarrow \alpha + (\alpha_{final} - \alpha_{initial}) / (E - 1)$  ;
                \EndIf
            \EndFor
            \State
                $\sigma \leftarrow \max ( \sigma / \sqrt{2}, \sigma_{min})$ ;
            \State
                $\alpha_{initial} \leftarrow \alpha_{final}$ ;
            \State
                $\alpha_{final} \leftarrow \min ( 2 \alpha_{final} , \alpha_{max})$ ;
            \State
                $E \leftarrow \max ( E / 2 , E_{min})$;
            \State
                $B_{\mathcal{X}} \leftarrow \min ( 2 B_{\mathcal{X}} , |\mathcal{X}|)$ ;
            \State
                $B_{\mathcal{W}} \leftarrow \min ( 2 B_{\mathcal{W}} , |\mathcal{W}|)$ ;
        \EndWhile
    \end{algorithmic}
\end{algorithm}

In Theorem \ref{thm: LDDE > angle distortion}, we observe that the optimal choice of $\sigma$ depends on $R$ and $r_f$. We expect the fidelity error $r_f$ decreases throughout the optimization, but the decreasing rate is hard to analyze. 
When the density of the point cloud increases, the gaps between the points shrink. If the batch size is doubled in each stage, then $R$, the longest edge length, is expected to be scaled down to $R/\sqrt{2}$. Therefore, the parameter $\sigma$ is also scaled to $\sigma/\sqrt{2}$ to match $R$. Moreover, Theorem \ref{thm: error bound for approximation of Hausdorff Distance} states the importance of a large choice of $\alpha$, and so the values of $\alpha$ is also doubled after each stage. In practice, the increment of $\alpha$ is done by a linear increase, from $\alpha_{initial}$ at the stage of a stage, to the desired values $\alpha_{final}$. 
Depending on the number of stages, parameters may become too extreme, leading to numerical instability. In our algorithm, capping the values of $\alpha, \sigma$, and also the number of epochs, are performed to avoid the potential problems.  

The overall optimization algorithm is summarized in Algorithm \ref{algo: optim}.

\section{Implementation}
\label{sec: implementation}

The proposed loss functions and optimization algorithm were implemented in Python, using deep learning library PyTorch \cite{paszke2019pytorch}. The experiments were computed using an NVIDIA RTX A6000. 

The parameters used in the algorithm are stated here. If the associated loss function is used, LEG weight $\beta_1 = 5$, HAND weight $\beta_2 = 1$ for matching parameter domain, landmark mismatch energy weight $\beta_3 = 1$. At the beginning of the first stage in the optimization process, number of epochs $E = 10000$, batch sizes $B_\mathcal{X} = B_\mathcal{W} = 1024$, LEG parameter $\sigma = 0.5$, HAND parameters $\alpha_{initial} = 2$, $\alpha_{final} = 20$. We bound the parameters by $\sigma_{min} = 0.001$, $\alpha_{max} = 100$, $E_{min} = 1000$. 

\begin{figure}[!h]
    \centering
    \includegraphics[width=.8\textwidth]{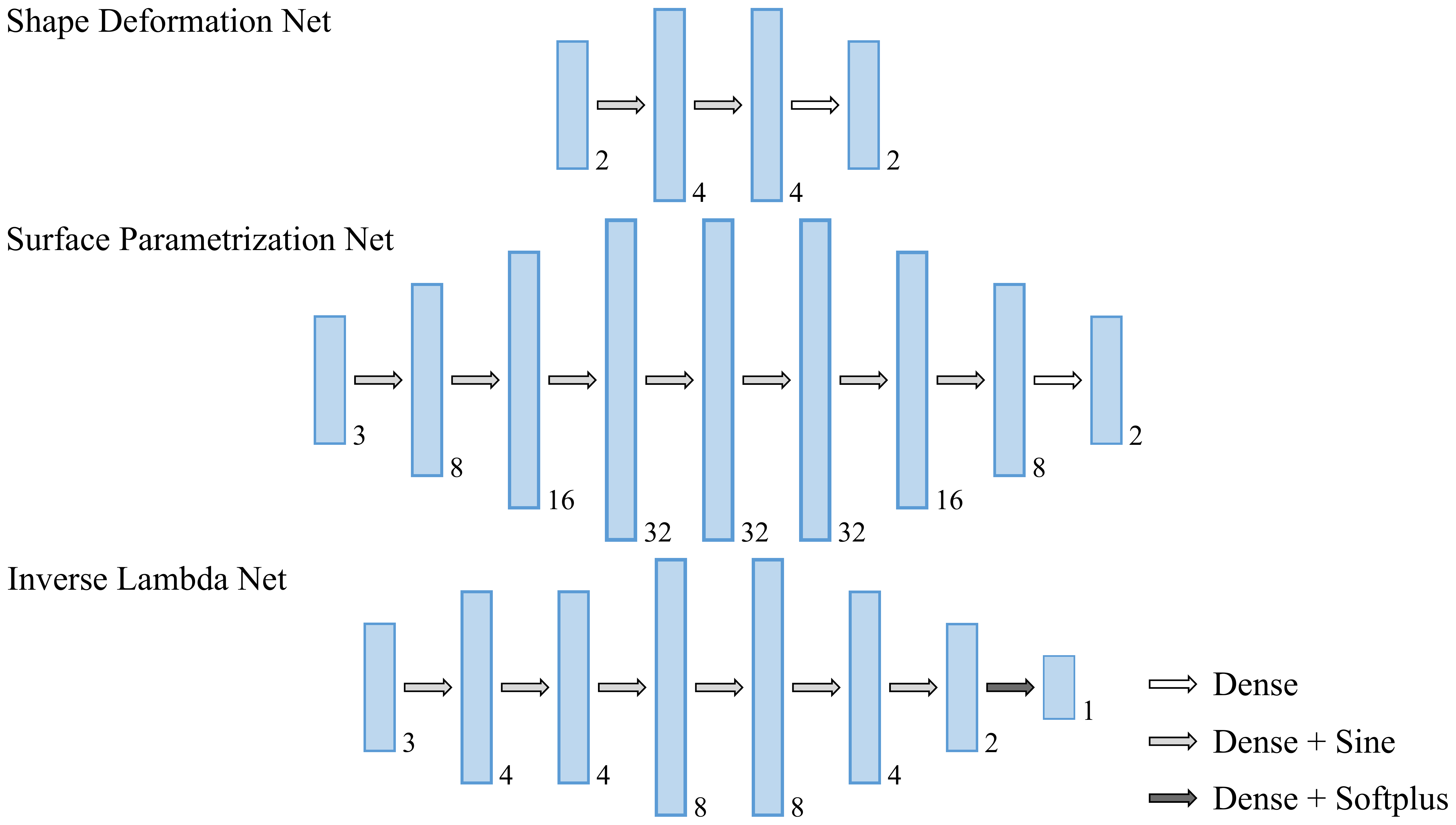}
    \caption{
        Architecture of the neural networks used in experiments. The arrows indicate the forward process. The bars indicate the latent variables, with dimensions stated under them. 
    }
    \label{fig: neural net arch}
\end{figure}



Networks of difference sizes are used for varying levels of difficulty, and their structure is illustrated in Figure \ref{fig: neural net arch}.  In the experiments in Section \ref{sec: plane shape deform}, a shape deforming network, shown in the top of Figure \ref{fig: neural net arch}, is used in optimizing only HAND and no LEG is involved. 
In Sections \ref{sec: FB surf param}, free-boundary parametrizations are performed, and in Sections \ref{sec: DC surf param}, \ref{sec: DLC surf param}, surface parametrization experiments optimizing both HAND and LEG are carried. 
Larger fully-connected neural networks to model the parametrization functions, the middle one in Figure \ref{fig: neural net arch}, and $\lambda_{inv}$ functions, the bottom one in Figure \ref{fig: neural net arch}, are used. In the networks, the sine function is used as the activation, except for the output layers. No activation is needed for the output layer of the parametrization nets, while the softplus function is used to activate the last layer of the $\lambda_{inv}$ nets so that the output is always positive. The minimization is done by stochastic gradient descent with RMSprop \cite{tieleman2012lecture} with parameters $\alpha = 0.99$, a learning rate of $0.0001$, with a momentum of $0.9$. 
All the weights and biases in the networks are randomly initialized, and hence the initial mappings and $\lambda_{inv}$ values.

\section{Experimental results}
\label{sec: experiments}

In this section, we illustrate the performance of our proposed methods through experiments conducted with various combinations of loss function terms, each serving different purposes. Additionally, different parameter domains, including various shapes and topologies, are used in experiments to demonstrate the flexibility of our methods.


\subsection{Performance of HAND}
\label{sec: plane shape deform}

\begin{figure}[!h]
    \centering
    \includegraphics[width=.8\textwidth]{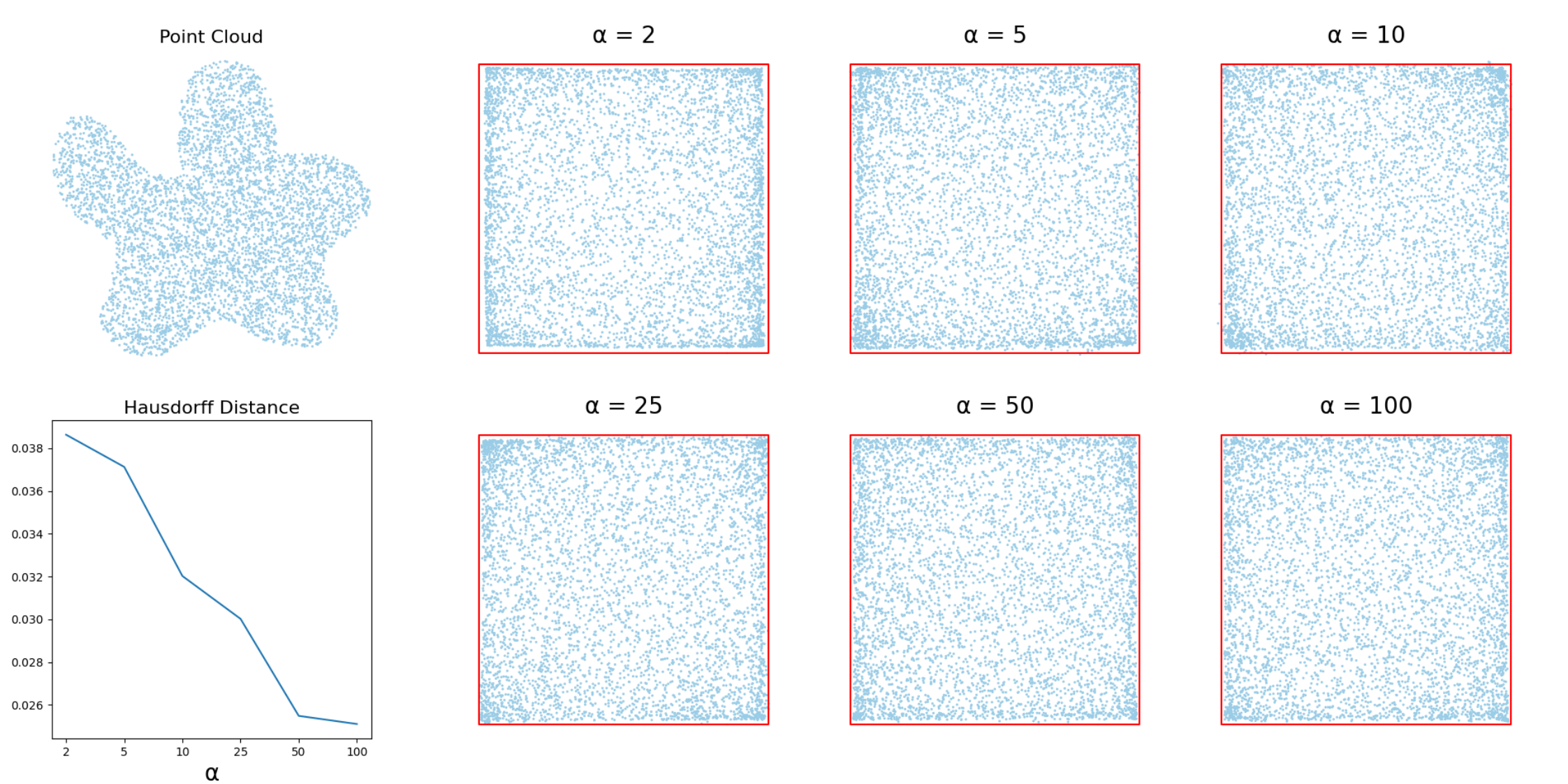}
    \caption{
        The mappings obtained by minimizing HAND with $\alpha = 2, 5, 10, 25, 50, 100$ on a 2D point cloud and a plot of their Hausdorff distance to a square, with boundary drawn in red. 
    }
    \label{fig: blob}
\end{figure}

We begin with an experiment for 2D shape deformation using only the proposed Hausdorff Approximation from Node-wise Distances (HAND). The experimental results are shown in Figure \ref{fig: blob}. 
A point cloud with an irregular shape, with $5000$ points in $\mathbb{R}^2$ is created, and mapped to a unit square, using only the presence of HAND, without considering the geometry distortion using LEG. In the figure, the boundary of unit square is drawn in red lines. 
We test the performance of HAND with different parameters $\alpha = 2, 5, 10, 25, 50, 100$. 
Due to the simplicity of the task, the optimizations are performed directly on entire point cloud, without dividing the optimization into several stages as discussed in Section \ref{sec: optim algo}.
The ability of shape matching by minimizing HAND is demonstrated in the Figure \ref{fig: blob}. Moreover, as value of $\alpha$ increases, a better match between the point cloud mapping and the boundary of the square is shown. 
To quantitatively examine the performance, a Hausdorff distance is measured between the mapped point cloud and a densely and uniformly sampled finite point set in the square. 
A line plot illustrating a decreasing trend in the Hausdorff distance as $\alpha$ increases. We can conclude that a larger value of $\alpha$ improves the performance of HAND in terms of matching matching the parameter domain. 

\subsection{Free-Boundary Surface Parametrization}
\label{sec: FB surf param}

\begin{figure}[!h]
    \centering
    \includegraphics[width=\textwidth]{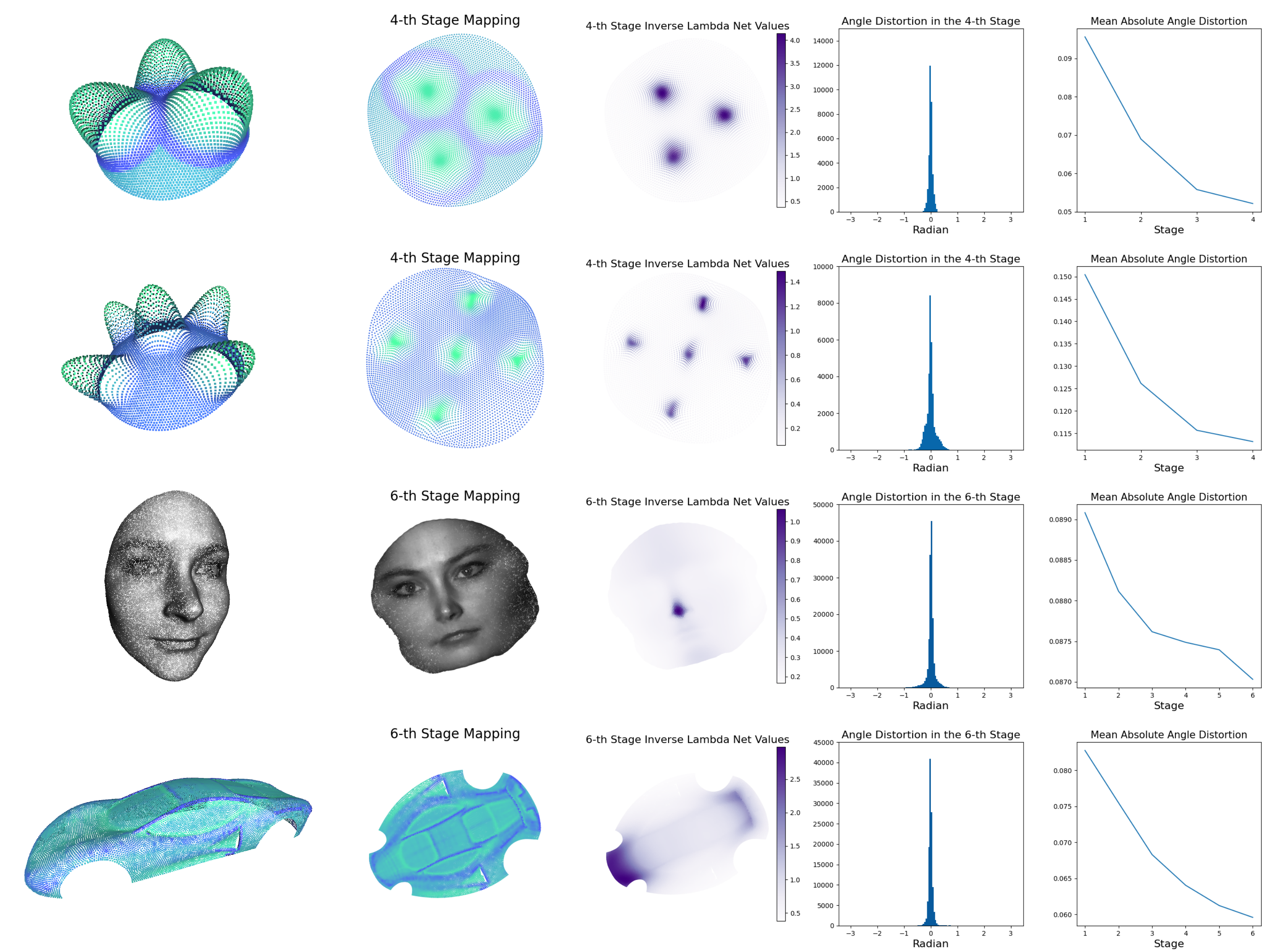}
    \caption{
        The results of free boundary experiments obtained by minimizing only LEG. Each row represent an experimental result for an input surface. 
        Top to bottom: three peaks, five peaks, human face, car shell. 
        Left to right: the input point cloud surface, final stage mapping, final stage inverse $\lambda$-Net values, histogram for angle distortion (in radian), and line plot of mean absolute angle distortion (in radian) against stages. 
    }
    \label{fig: freeBdy aio}
\end{figure}

Free-boundary parametrization for point cloud surfaces are done by minimization on only Localized Energy for Geometry (LEG).
We examine two open surfaces, one with three peaks and another with five peaks, both containing $5809$ points. Additionally, a human face mesh with $24891$ points, as well as a car shell surface with $19320$ points, are included in these free-boundary parametrization experiments. 

The parametrization results are illustrated in Figure \ref{fig: freeBdy aio}. Figure \ref{fig: freeBdy aio} shows the outputs of the parametrization networks $f(\mathcal{X})$ and inverse $\lambda$ networks $\lambda_{inv} (\mathcal{X})$ at the last stage. 
Theorem \ref{thm: LDDE > angle distortion} provides a relationship between LEG and angle distortion on triangular meshes. In order to quantitatively analyze the geometric distortion, a mesh structure is given to the point cloud, and angle distortion is computed on it. 
The histograms, which indicate that the angle differences in the last stage are concentrated at 0, suggest the ability of angle preserving via minimization of LEG. Also, the line plots illustrate decreasing trends in mean absolute angle distortion with the progression of stages, demonstrating that the optimization scheme yields better parametrization results.
The plots for $\lambda_{inv} (\mathcal{X})$ show that if the points are concentrated in certain regions, for example the peaks, nose, or the front part of the car shown in Figure \ref{fig: freeBdy aio}, the output values of $\lambda_{inv}$ nets are higher. This observation aligns with the intuition that $\lambda_{inv}$ measures the scaling of distance under the mapping $f$, and confirms the significance of $\lambda_{inv}$. 

\subsection{Domain-Constrained Surface Parametrization}
\label{sec: DC surf param}
Experiments incorporating both HAND and LEG in the optimization process are discussed in this section. The human face and car shell used previously in free-boundary parametrization experiments are also used in the following experiments. 

\begin{figure}[!h]
    \centering
    \includegraphics[width=.8\textwidth]{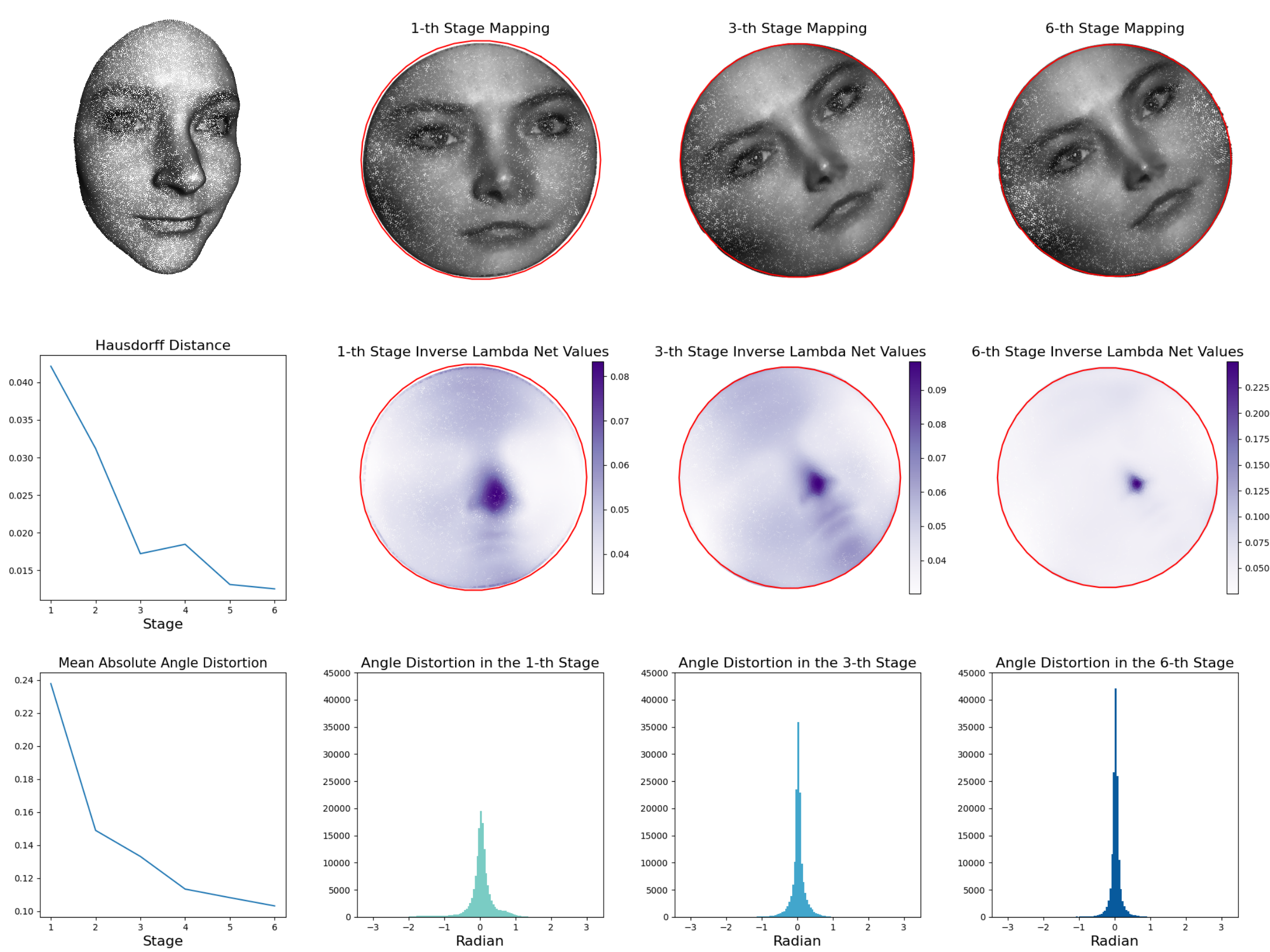}
    \caption{
        Results for parametrizing a human face to unit disk. In the first column, the input human face, line plot for Hausdorff distance against stages, line plot for mean absolute angle distortion (in radian) are plotted from top to bottom. The second to the last columns show the result after the $1^{\text{st}}$, $3^{\text{rd}}$ and $6^{\text{th}}$ stages, including output mappings, $\lambda_{inv}$ networks output values, histogram for angle distortion (in radian) from top to bottom. The red lines indicate the boundary of the unit disk.  
    }
    \label{fig: fixedBdy sophie}
\end{figure}

First of all, the parametrization on the human face is computed, with a unit disk chosen as the target parameter domain. 
The experimental results are shown in Figure \ref{fig: fixedBdy sophie}. This figure summarizes the overall performance of parametrization using the two line plots: one showing Hausdorff distance, the other showing mean absolute angle distortion (in radian), against stages. The plots generally show decreasing trends, suggesting the improvements made during the optimization process through parameter adjustments. 
The evolution from the earlier stages to latter stages are shown by plotting the results after the $1^{\text{st}}$, $3^{\text{rd}}$ and $6^{\text{th}}$ stages, with details in the last three columns. In the $1^{\text{st}}$ stage, a gap between the point cloud mapping and the boundary of the parameter domain can be clearly observed. Meanwhile, the alignment between the mapping and boundary is much better improved in latter stages, supporting the line plot for Hausdorff distance. The histograms for angle distortion (in radian) illustrate that the angle distortion are more concentrated at 0 in the latter stages than in the earlier stages, coinciding with the decrease in mean absolute angle distortion shown in the line plot. We can also observe that the mapping of the nose shrinks as the stages progress, matching the rising values in the nose illustrated in the plots for $\lambda_{inv}$ values. 

\begin{figure}[!h]
    \centering
    \includegraphics[width=.5\textwidth]{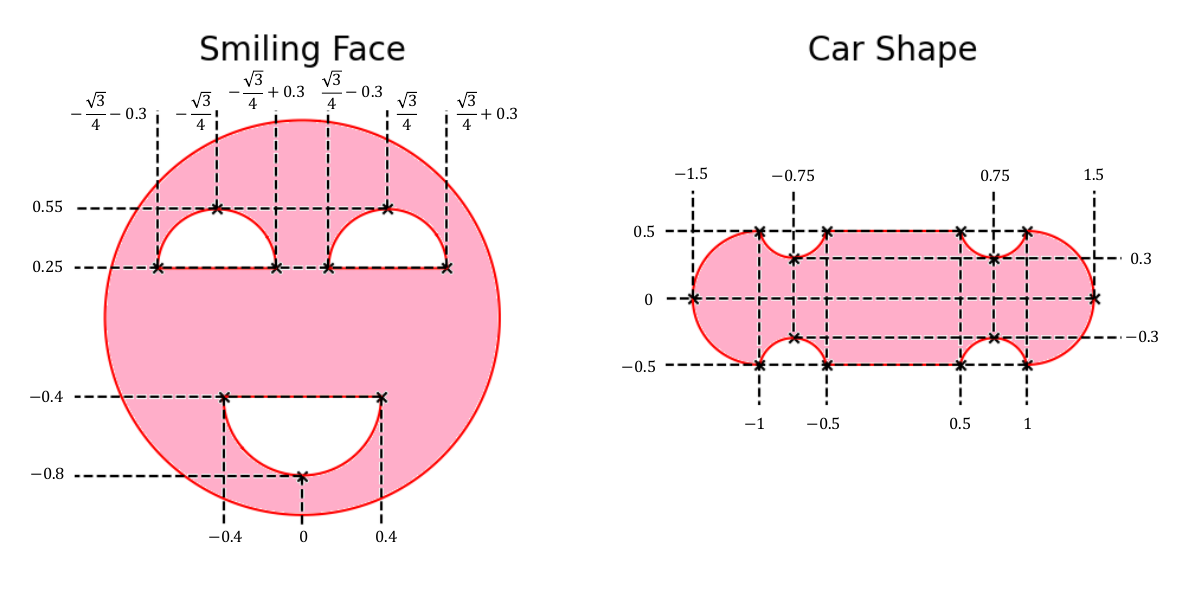}
    \caption{
        The smiling face and car-shaped parameter domains. The vertical and horizontal dash lines are indicating the $x$-coordinates and $y$-coordinates. 
    }
    \label{fig: param domain smiling car}
\end{figure}

Instead of choosing a simple parameter domain such as a unit disk, experiments with more complicated parameter domains are demonstrated. 
In Figure \ref{fig: param domain smiling car}, the left plot illustrated a multiply-connected smiling face, formed by removing half disks from unit disk, and the right plot shows a car-shaped parameter domain, whose boundary consist of circular arcs and straight lines. Figure \ref{fig: param domain smiling car} also indicates the coordinates for some important points in the shapes. The horizontal dash lines indicate the $y$-coordinates, while vertical dash lines indicate the $x$-coordinates. 

\begin{figure}[!h]
    \centering
    \includegraphics[width=.8\textwidth]{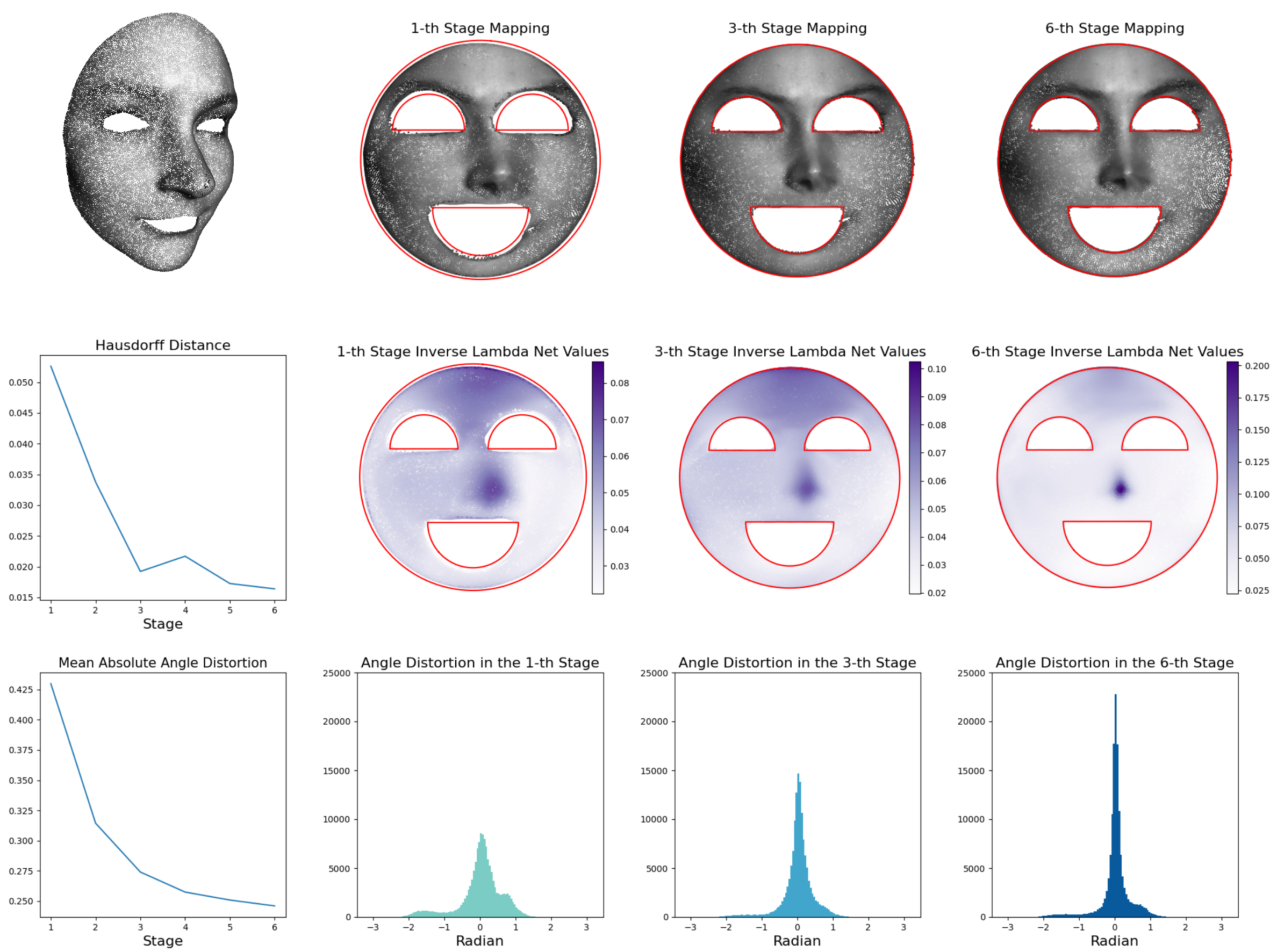}
    \caption{
        Results for parametrizing a human face, with its mouth and eyes removed, to a prescribed smiling face. 
        The organization of the figure is the same as Figure \ref{fig: fixedBdy sophie}. 
    }
    \label{fig: fixedBdy sophie noEyeMouth}
\end{figure}

Mouth and eyes are removed from the previously used human face, resulting in a point cloud with $23871$ points. 
The point are used as an input point cloud surface in computing a parametrization to the smiling face drawn in Figure \ref{fig: param domain smiling car}. The experimental results are shown below in Figure \ref{fig: fixedBdy sophie noEyeMouth}. The plots in the figure are similar to the above experiment, which parametrized the simply-connected human face to the unit disk. 
By observing the point cloud mappings and histograms for angle distortion, we can see the mismatch between the point cloud mapping and high angle distortion appeared in the earlier stages, while they are greatly improved in the latter stages. This provides evidence to the decreasing trends shown in the two line plots for Hausdorff distance and mean absolute angle distortion against stages. The changes in $\lambda_{inv}$ values also reflect the changes in the scaling of the mappings. 

\begin{figure}[!h]
    \centering
    \includegraphics[width=.8\textwidth]{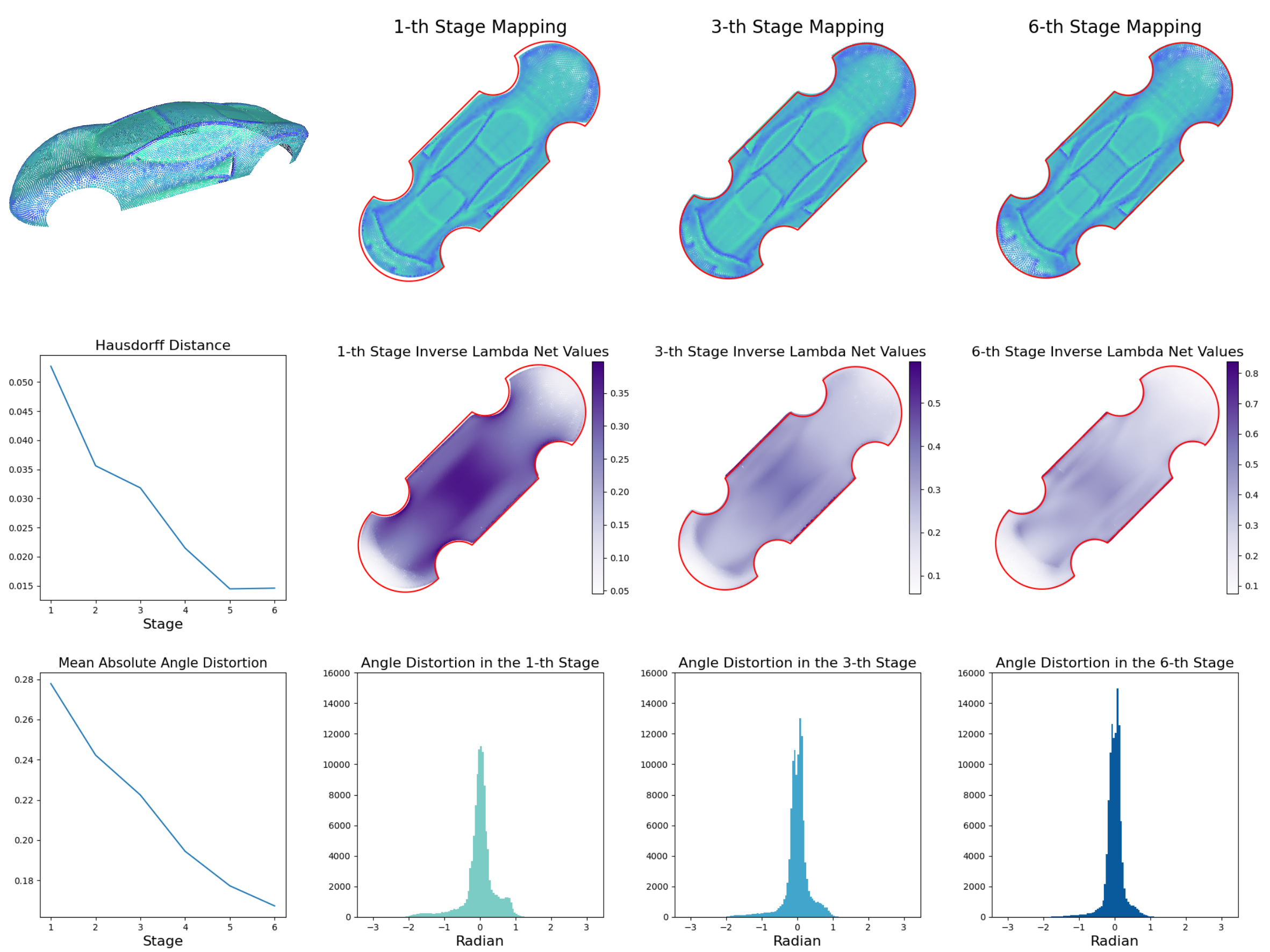}
    \caption{
        Results for parametrizing a car shell, to a prescribed car-shaped domain. 
        The organization of the figure is the same as Figure \ref{fig: fixedBdy sophie}. 
        The mappings and $\lambda_{inv}$ plots are rotated $45^\circ$.
    }
    \label{fig: fixedBdy car}
\end{figure}

In Figure \ref{fig: fixedBdy car}, the point cloud sampled from the outer shell of a car surface, drawn in the upper left corner, is used and mapped to the prescribed car-shaped parameter domain shown in Figure \ref{fig: param domain smiling car}. The other plots include Hausdorff distance, mean absolute angle distortion, and the mappings, $\lambda_{inv}$ and histograms for angle distortion. The plots for mappings and $\lambda_{inv}$ are rotated $45^\circ$ for better illustration. The overall performance are similar to the previous experiments. The improvement in preserving geometry and matching the parameter domain can be clearly observed in the plots. Alongside the previous experiment mapping a multiply-connected human face to a smiling face, shown in Figure \ref{fig: fixedBdy sophie noEyeMouth}, the effectiveness of our proposed HAND and LEG in handling point cloud surfaces and parameter domains with complicated shapes or topologies is demonstrated. 

\subsection{Domain-Landmark-Constrained Surface Parametrization}
\label{sec: DLC surf param}

\begin{figure}[!h]
    \centering
    \includegraphics[width=.25\textwidth]{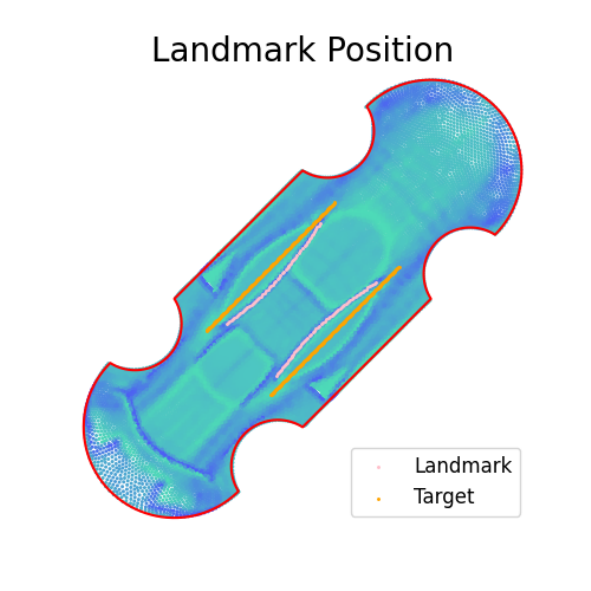}
    \caption{
        The landmarks and target position, drawing on the car shell parametrization result previously without landmark matching. The pink dots are the landmarks, while the orange dots are the target positions. The plot is rotated $45^\circ$.
    }
    \label{fig: ldmkMatch car landmark positions}
\end{figure}

The next experiment extends the car shell parametrization to include landmark matching. The landmarks and target positions are illustrated in Figure \ref{fig: ldmkMatch car landmark positions}, based on the previous parametrization result. The landmarks chosen are the points on the two upper edges of the two side windows, totally $134$ points, while the target consists of $400$ points sampled from two straight lines $ [ -0.5, 0.5 ] \times \{ -0.25, 0.25 \}$. The landmark matching is done by treating the two edges of the windows as a single landmark set, and the two straight lines as a single target set, instead of dividing them into two landmark-target pairs. 

\begin{figure}[!h]
    \centering
    \includegraphics[width=.8\textwidth]{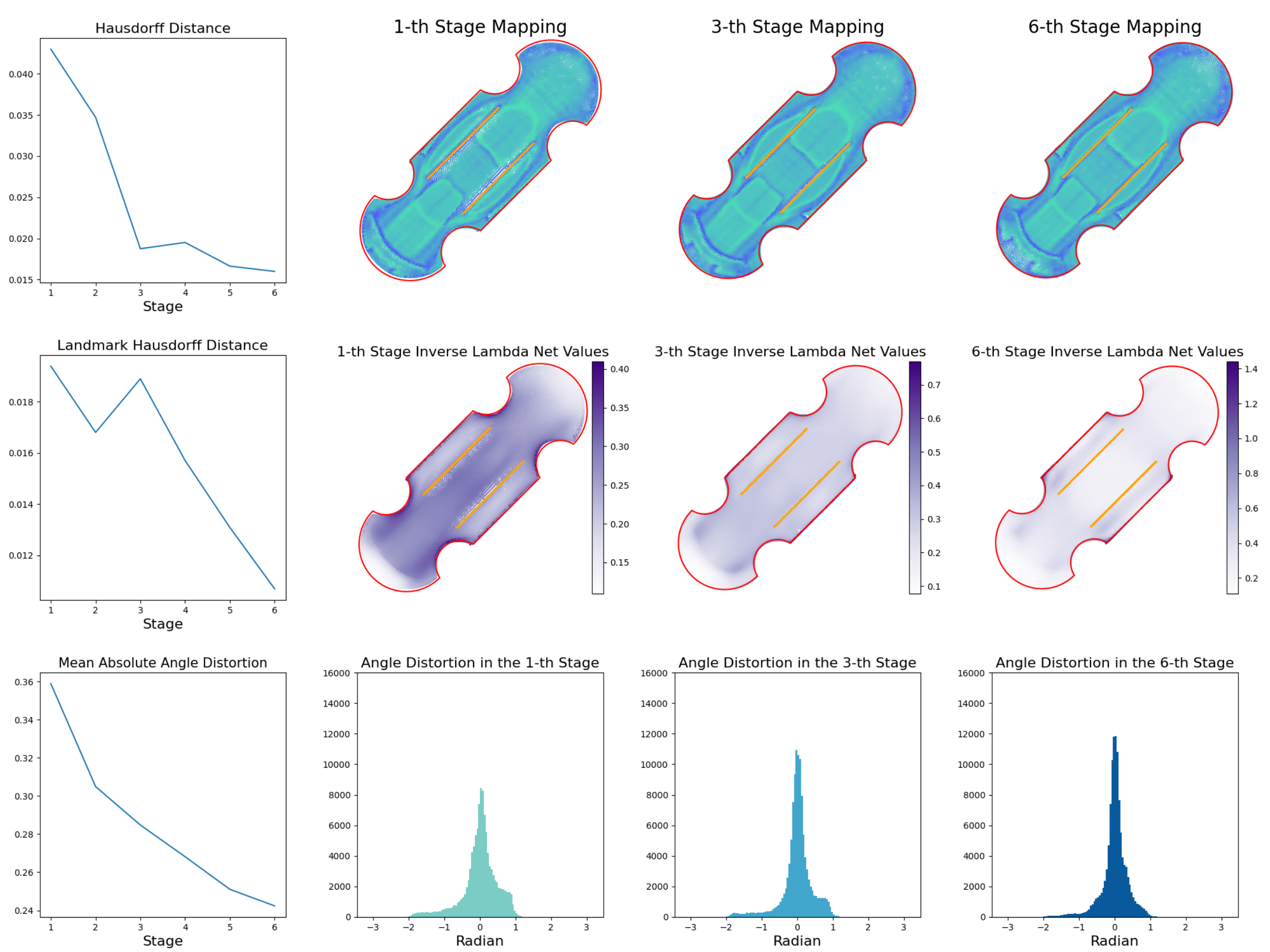}
    \caption{
        Results for parametrizing a car shell, to a prescribed car-shaped domain, together with landmark matching. 
        In the first column, line plot for Hausdorff distance for matching parameter domain, Hausdorff distance for landmark matching, mean absolute angle distortion (in radian) against stages are plotted from top to bottom. 
        The second to the last columns have the same organization with those in Figure \ref{fig: fixedBdy sophie}. 
        The pink dots are the target positions. The red lines indicate the boundary of the car-shaped domain. The mappings and $\lambda_{inv}$ plots are rotated $45^\circ$.
    }
    \label{fig: ldmkMatch car}
\end{figure}

The parametrization results with landmark matching are demonstrated in Figure \ref{fig: ldmkMatch car}. The line plots in the first column show the Hausdorff distance between point cloud mapping with parameter domain, the Hausdorff distance between landmark mapping and target position, and the mean absolute angle distortion against stages. The other three columns show the mappings, $\lambda_{inv}$ and histograms for angle distortion after $1^\text{st}$, $3^\text{rd}$, $6^\text{th}$ stages, as before. 
The improvement in matching parameter domain and preserving geometry is observed through the decrease in Hausdorff distance with parameter domain and mean absolute angle distortion.  
The landmark mappings are drawn in pink, and targets are drawn in orange. However, the pink points are barely visible, indicating that the landmark mappings and targets are well-aligned. 
Also, the line plot for landmark Hausdorff distance shows a decreasing trend, indicating an improvement in landmark matching, even though this improvement is not clearly visible in the mapping plots. Notably, our method treats the landmark edges on the point cloud surface and the landmark lines in the target parameter domain as single sets without predefined correspondences, yet it automatically identifies the correct mapping between each landmark edge and its corresponding landmark line.

\section{Applications}
\label{sec: application}
In this section, we discuss and introduce some of the applications of our proposed methods, including boundary detection and surface reconstruction. 

\subsection{Boundary Detection}
Boundary information of geometric object can be applied in academic research, like object registration and shape analysis. In industry, it can also aid in quality control and fault detection in manufacturing. 
The approach for boundary detection is performed on the simply connected human face and car shell surface. The result mappings of free-boundary parametrization, shown in Figure \ref{fig: freeBdy aio}, discussed in Section \ref{sec: FB surf param} are used. 

In our methods, the free-boundary parametrization minimizes LEG $D_\sigma(f, \lambda_{inv}, \mathcal{X})$ to obtain the mapping $f(\mathcal{X})$. Delaunay triangulation is then applied on $f(\mathcal{X})$, and a triangular mesh structure $(f(\mathcal{X}), \mathcal{T})$ is obtained. On the mesh, edges that only appear in one triangle face is detected as boundary edge. 

However, the Delaunay triangulation gives the triangulation of the convex hull of $f(\mathcal{X})$, but since the mapping is not always convex, some nonsensical triangular faces may be included. 
Suppose the all edge lengths of the ground truth surface mesh are shorter than $h > 0$. The triangular faces in $(\mathcal{X}, \mathcal{T})$ with at least one edge longer than $h$ can be defined as extra faces and removed from $\mathcal{T}$. Boundary detection is then applied to the refined triangular mesh to obtain the boundary of the point cloud. 

The result free-boundary mappings, demonstrated on Section \ref{sec: FB surf param}, of human face and car shell, shown in Figures \ref{fig: freeBdy aio}, are used, and the boundary detection results are shown in Figure \ref{fig: bdy detect AIO}. By drawing the results on the mappings in 2D plane, the figures show the ground truth boundaries for comparison, the convex hulls of the point cloud mappings, and the predicted boundaries with different choices of threshold $h$, all drawn in red lines. In Figure \ref{fig: bdy detect AIO}, the predicted boundaries using the finest threshold, $h = 0.015$ for human face and $h = 0.05$ for car shell, are drawn on the 3D point cloud surfaces. 
The results demonstrate that the predicted boundaries with an appropriate choice of $h$ is able to recover the ground truth boundaries. 

\begin{figure}[!h]
    \centering
    \includegraphics[width=\textwidth]{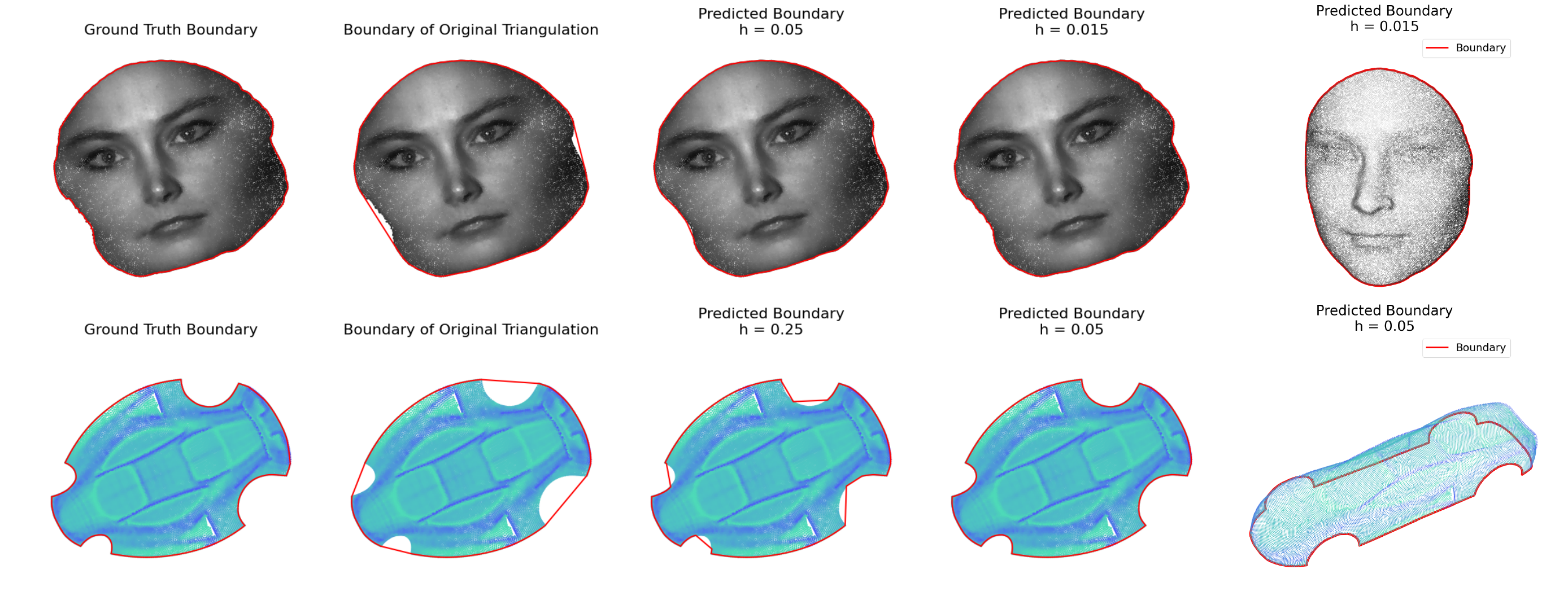}
    \caption{Result of boundary detection on a human face (top) and a car shell (bottom). Left to right: ground truth boundaries, convex hulls, and two detected boundaries with different thresholds, the detected boundaries with finer thresholds on the original point clouds. }
    \label{fig: bdy detect AIO}
\end{figure}




\subsection{Surface Reconstruction}
Surface reconstruction, which recovers geometric objects from unstructured point clouds, can be applied in finite element analysis, medical imaging, computer aided design, and various other fields. 

In Section \ref{sec: DC surf param}, point cloud surfaces are mapped to prescribed parameter domain, in where planar triangular mesh can be created. 
The planar meshes are generated by a Python library pygmsh \cite{Schlomer_pygmsh_A_Python}, which is an interface to Gmsh \cite{geuzaine2009gmsh}. 
The inverse mapping of the vertices of the newly created mesh can be obtained via interpolation using the mapping of the point cloud. A reconstructed surface triangular mesh can then be formed using pre-images of the vertices and their connectivity. 

\begin{figure}[!h]
    \centering
    \includegraphics[width=.6\textwidth]{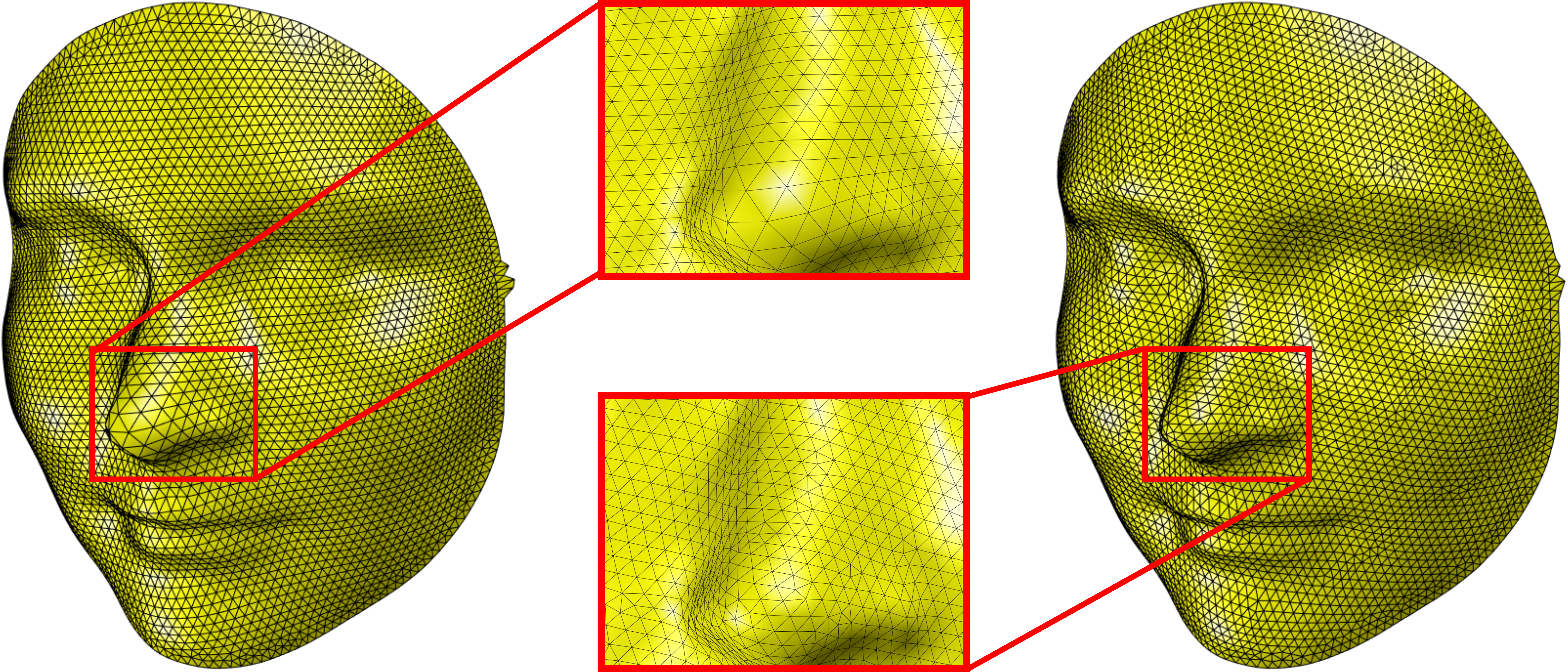}
    \caption{
        The human face triangular meshes reconstructed from uniform mesh (left) and $\lambda_{inv}-$dependent mesh (right). The close-ups to their noses are shown in the red boxes. 
    }
    \label{fig: sophie surf recon}
\end{figure}

\begin{figure}[!h]
    \centering
    \includegraphics[width=.6\textwidth]{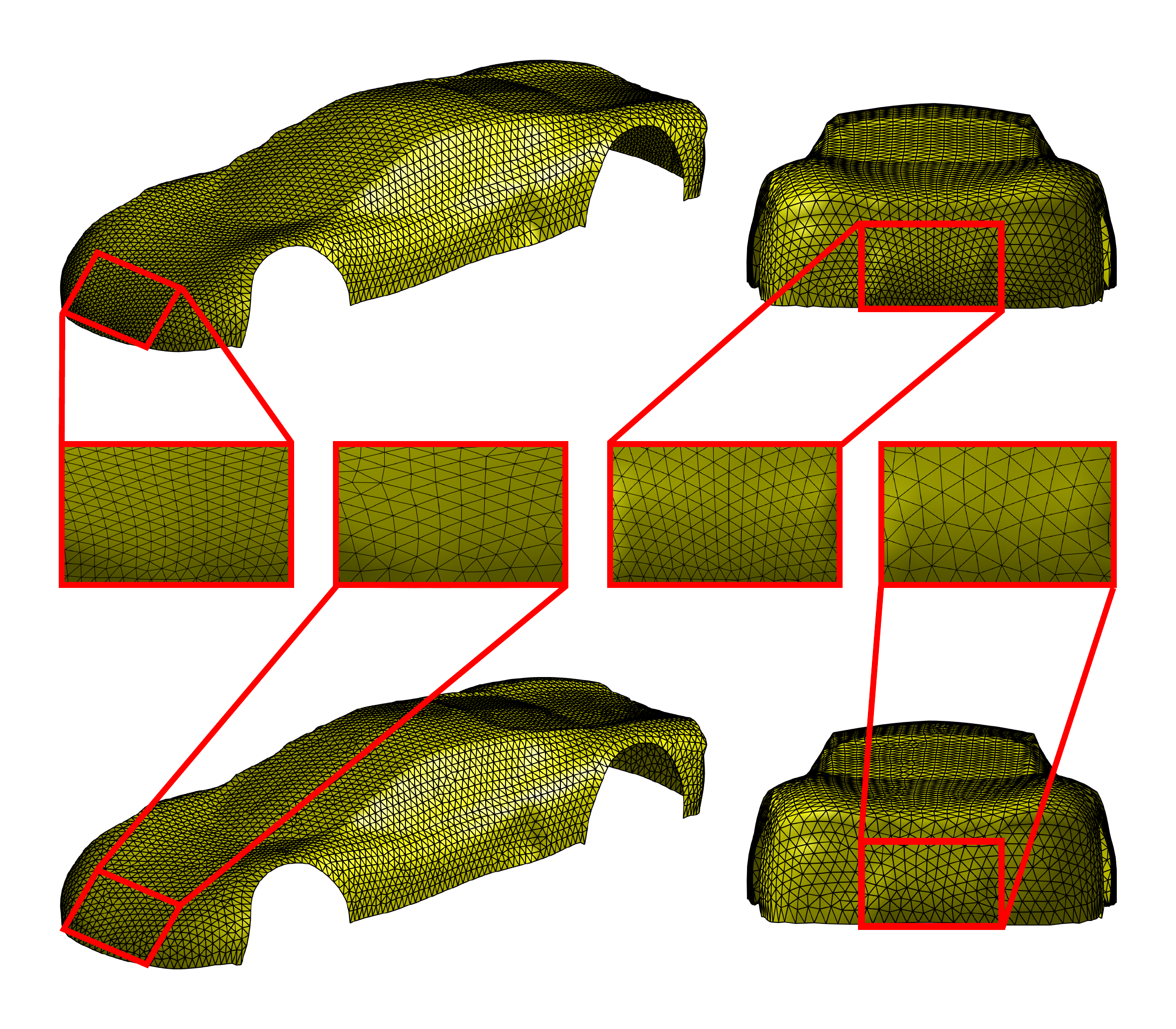}
    \caption{
        The car triangular meshes reconstructed from uniform mesh (upper) and $\lambda_{inv}-$dependent mesh (lower). 
        They are viewed from the side (left) and back (right). 
        The close-ups to local regions are shown in the red boxes. 
    }
    \label{fig: car surf recon}
\end{figure}

In creating the 2D triangular meshes on the parameter domains, we tried two methods: one was to create a uniform mesh, and the other was to create a mesh with edge lengths are depending on the values of $\lambda_{inv}$ networks. More specifically, for the latter method, the edge lengths were approximately proportional to $1/\sqrt{\lambda_{inv}}$. The idea is that the value represents the scaling of Euclidean distance between the original point cloud and its mapping, thereby creating a triangular mesh with more uniform sizes of triangular faces. 

The surface of human face in Figure \ref{fig: fixedBdy sophie} and car in Figure \ref{fig: fixedBdy car} are reconstructed with the two described methods. The reconstruction results are shown in Figures \ref{fig: sophie surf recon} and \ref{fig: car surf recon}. 
In Figure \ref{fig: sophie surf recon}, the reconstructed human faces are shown with the close-ups of the noses. The closes-ups of the noses suggesting that the $\lambda_{inv}$-depending mesh appears to be more uniform.  
In Figure \ref{fig: car surf recon}, the reconstructed surface of the car point cloud are shown from side and back. Close-ups also show that in regions that are generally smooth, the upper mesh has unnecessarily dense triangular faces, due to the scaling of mapping, while the lower mesh has a more uniform distribution. 

\section{Conclusions}
\label{sec: conclusions}
In this paper, we formulate the problem of point cloud surface parametrization into an optimization problem, using only the coordinates of points. Two novel loss functions, Hausdorff Approximation from Node-wise Distances (HAND) and Localized Energy for Geometry (LEG), have been developed specifically for the purpose. 
Theoretical analysis is performed for justifying the geometric meaning of them. 
By minimizing HAND, the mapping-domain and landmark-target alignments can be improved. On the other hand, LEG is a estimation of the angle distortion between surface and mapping. 
The functions $f$ and $\lambda_{inv}$ in the optimization problem are modeled by neural networks and optimized using stochastic gradient descent.
Based on the theorems proved, an optimization algorithm for further improving the parametrization quality is also developed. 

In order to show the effectiveness of our methods, the experiments are conducted in various manners, including examining the performance of HAND, free-boundary, domain-constrained and domain-landmark-constrained parametrization. 
Despite the random initialization of parameters in the neural networks, the alignments between point cloud mappings and parameter domains, and between landmark mappings and targets, are accurate. Additionally, the geometry between the mappings and original surfaces is well preserved, suggesting the effectiveness and robustness of our methods. 
We introduce boundary detection and surface reconstruction as some applications to our methods, showing promising performance. 

Some potential future directions of our methods include acceleration in processing large datasets of point clouds, extension to closed surfaces and non-planar domains, or high dimensional mapping problems. Further controls on the geometry or alignment for more specific applications can also be investigated.


\bibliographystyle{siamplain}
\bibliography{references}
\end{document}